\let\emptyset\varnothing
\theoremstyle{plain}
\newtheorem{theorem}{Theorem}
\newtheorem*{theorem*}{Theorem}
\newtheorem{lemma}[theorem]{Lemma}
\newtheorem{corollary}[theorem]{Corollary}
\newtheorem{fact}{Fact}
\newtheorem{informal}{Informal Theorem}
\theoremstyle{definition}
\newtheorem{definition}{Definition}
\theoremstyle{remark}
\newcommand{\poly}{\operatorname{poly}}
\newcommand{\mdmdp}{\textsf{MDMDP}}
\newcommand{\odp}{\textsf{ODP}}
\newcommand{\sadp}{\textsf{SADP}}
\newcommand{\opt}{\textsc{OPT}}
\keywords{optimal mechanism design; revenue; reductions; gross substitutes.}
\title{On the (in)-approximability of Bayesian Revenue Maximization for a Combinatorial Buyer}
\author{Natalie Collina}
\affiliation{%
  \institution{Princeton University}
}
\author{S. Matthew Weinberg}
\affiliation{%
  \institution{Princeton University}
}
\date{}
\begin{abstract}
We consider a revenue-maximizing single seller with $m$ items for sale to a single buyer whose value $v(\cdot)$ for the items is drawn from a known distribution $D$ of support $k$. A series of works by Cai et al. establishes that when each $v(\cdot)$ in the support of $D$ is additive or unit-demand (or $c$-demand), the revenue-optimal auction can be found in $\poly(m,k)$ time.

We show that going barely beyond this, even to matroid-based valuations (a proper subset of Gross Substitutes), results in strong hardness of approximation. Specifically, even on instances with $m$ items and $k \leq m$ valuations in the support of $D$, it is not possible to achieve a $1/m^{1-\varepsilon}$-approximation for any $\varepsilon>0$ to the revenue-optimal mechanism for matroid-based valuations in (randomized) poly-time unless NP $\subseteq$ RP (note that a $1/k$-approximation is trivial). 

Cai et al.'s main technical contribution is a black-box reduction from revenue maximization for valuations in class $\mathcal{V}$ to optimizing the difference between two values in class $\mathcal{V}$. Our main technical contribution is a black-box reduction in the other direction (for a wide class of valuation classes), establishing that their reduction is essentially tight.
\end{abstract}
\begin{document}
\maketitle

\section{Introduction}\label{sec:intro}
``Multi-dimensional mechanism design'' has been a driving force in Mathematical Economics since Myerson's seminal work~\cite{Myerson81}, and also in Algorithmic Game Theory since its introduction to TCS by seminal work of Chawla, Hartline, and Kleinberg~\cite{ChawlaHK07}. The problem has gained broad interest within TCS owing to the complexity of optimal solutions~\cite{Thanassoulis04, Pavlov11,BriestCKW10, HartN13, DaskalakisDT15, DaskalakisDT17, HartR15, RubinsteinW15, DaskalakisDT14, ChenDPSY14, ChenDOPSY15}. In light of this, there now exists a substantial body of work providing algorithms to find optimal (or approximately optimal) auctions, accepting that the resulting solution may be randomized, non-monotone, or not particularly simple~\cite{Alaei11, CaiDW12a, CaiDW12b, AlaeiFHHM12, AlaeiFHH13, CaiDW13a, CaiDW13b, DaskalakisW15, DaskalakisDW15}.

To properly parse the results of our work, we need to be clear about the input model and objective. The seller has $m$ items for sale to a single buyer, and is given as input an explicit finite-support distribution $D$ over valuation functions in some class $\mathcal{V}$ (given by listing the valuations along with the probability with which they are drawn from $D$).\footnote{We will formally specify in Section~\ref{sec:prelim} how valuation functions are presented as input, but quickly note that all results in this line of work (including ours) are compatible with any standard input format, such as value oracles, demand oracles, or an explicit poly-sized circuit which computes either query.} The designer's goal is to output a description of an auction which (approximately) maximizes expected revenue. The main positive results of these prior works provide poly-time algorithms for the revenue-optimal \emph{multi-buyer} auction, when all buyers are additive/unit-demand/$c$-demand (which of course imply poly-time algorithms for a single buyer as well, although the single-buyer case does not require many of the developed tools).\footnote{A valuation is $c$-demand if $v(S):= \max_{T \subseteq S, |T| \leq c}\{\sum_{i \in T} v(\{i\})\}$. Additive valuations are $m$-demand, and Unit-demand valuations are $1$-demand.}~\cite{CaiDW13b} also establishes a hardness result: there exists a constant $d$ such that it is not possible to guarantee a $1/m^d$-approximation for a single buyer whose valuation is submodular\footnote{A function $v(\cdot)$ is submodular if $v(S \cap T) + v(S \cup T) \leq v(S) + v(T)$ for all $S, T$.} in poly-time unless NP $\subseteq$ RP. Our first main result establishes that revenue-maximization is inapproximable immediately beyond $c$-demand valuations:

\begin{informal}[See Theorem~\ref{thm:matroidbased}] Unless NP $\subseteq$ RP, for all $\varepsilon>0$, there is no poly-time $1/m^{1-\varepsilon}$ approximation to the revenue-optimal auction for a single buyer whose values for $m$ items are drawn from a distribution of support $\leq m$ over \emph{matroid-based} valuations.\footnote{A matroid-based valuation satisfies $v(S):= \max_{T \subseteq S, T \in \mathcal{I}}\{\sum_{i \in T} v(\{i\})\}$, where $\mathcal{I}$ is a matroid.}
\end{informal}

Importantly, note that a $1/\text{support}(D)$-approximation is trivial, so Theorem~\ref{thm:matroidbased} rules out essentially any non-trivial approximation. To best understand our proof of Theorem~\ref{thm:matroidbased}, we must first overview the proof approach of~\cite{CaiDW13b} for their positive results.~\cite{CaiDW13b} establishes an approximation-preserving black-box reduction from revenue-maximization when all valuations come from class $\mathcal{V}$ (refer to this problem as \mdmdp($\mathcal{V}$), formal definition in Section~\ref{sec:prelim}) to maximizing the difference of two functions in $\mathcal{V}$ (refer to this problem as \odp($\mathcal{V}$), formal definition in Section~\ref{sec:prelim}). For simple classes like additive valuations, \odp\ is easy, and this yields their positive results. For complex valuation classes like submodular,~\cite{CaiDW13b} establish that \odp\ is hard, and this intuition drives their inapproximability result. 

However,~\cite{CaiDW13b} lacks a formal reduction from \odp\ to \mdmdp. Using intuition that \odp(submodular) is hard, they directly construct hard \mdmdp\ instances. Our next main result is a reduction from \odp\ to \mdmdp. That is, the~\cite{CaiDW13b} reduction is essentially tight.

\begin{informal}[See Theorem~\ref{thm:mainreduction}] For suitable $\mathcal{V}$, there is a black-box reduction from \odp($\mathcal{V}$) to \mdmdp($\mathcal{V}$) which is almost approximation-preserving.
\end{informal}

Theorem~\ref{thm:mainreduction} allows us to reason exclusively about hardness of \odp, and conclude hardness for \mdmdp. Our final result provides a clean framework to quickly establish when \odp($\mathcal{V}$) is inapproximable, and proves that \odp(matroid-based) is indeed inapproximable (and also that matroid-based is suitable for the reduction in Theorem~\ref{thm:mainreduction}). 

\subsection{Context and Related Work}\label{sec:related}
The context in which to view our work is the following:~\cite{CaiDW13b} establishes a reduction from \mdmdp\ to \odp, and our Theorem~\ref{thm:mainreduction} establishes a reduction from \odp\ to \mdmdp, so their reduction is tight. Moreover, \odp\ is now a simple lens through which one can study (tightly!) the computational complexity of mechanism design --- revenue-maximizing Bayesian mechanism design for valuations in class $\mathcal{V}$ is \emph{exactly} as hard as maximizing the difference of two functions in $\mathcal{V}$ (\cite{CaiDW13b} provides one direction of the reduction, and we provide the other).

Two previous works prove hardness of revenue maximization in this model. As already discussed,~\cite{CaiDW13b} uses intuition from hardness of \odp(submodular) to directly construct hard instances for \mdmdp(submodular). They even provide a partial framework for proving similar results (which we overview in Section~\ref{sec:prelim}). Our results improve these by replacing submodular with matroid-based (a vastly more restrictive class --- it is a proper subset of gross substitutes), and by providing a true reduction from \odp\ to \mdmdp.

The other known hardness result is from~\cite{DobzinskiFK11}, who establish that \mdmdp(OXS)\footnote{OXS is defined in Section~\ref{sec:prelim}.} is NP-hard to solve \emph{exactly}. Their proof does follow a principled framework, and they establish roughly that whenever \odp($\mathcal{V}$) is hard to solve exactly, that \mdmdp($\mathcal{V}$) is hard to solve exactly as well. Their framework, however, is very clearly limited to exact hardness.\footnote{This is because their produced \mdmdp\ instances always have support $2$, and it is trivial to get a $1/2$-approximation on instances of support $2$.} Our results improve  these by providing an approximation-preserving framework, which allows for hardness of approximation. 

There is also a long series of related work in the \emph{independent items} model~\cite{ChawlaHK07, ChawlaHMS10, ChawlaMS15, HartN12, LiY13, BabaioffILW14,RubinsteinW15, Yao15, CaiDW16, ChawlaM16, CaiZ17}. Here, the distribution $D$ is not given explicitly by listing its support, but some form of sample access (or other concise description) is given instead. The restriction is that $D$ satisfies ``independent items''. We refer the reader to~\cite{RubinsteinW15} for the general definition, but the example to have in mind is an additive buyer, where independent items simply means that $v(\{i\})$ is drawn independently of $v(\{j\})$ for all $i \neq j$. Observe that inputting such a distribution explicitly in our model would require input of size $\text{exp}(m)$. So the way to reconcile works such as~\cite{DaskalakisDT14, ChenDPSY14, ChenDOPSY15} (which prove hardness of exact optimization for a single additive/unit-demand buyer) with folklore LPs (which provide poly-time algorithms for exact optimization for a single additive/unit-demand buyer) is that the hardness results rule out $\poly(m)$-time solutions, whereas the LPs run in time $\text{exp}(m)$ (which is the size of the support of the distribution). Similarly, the way to reconcile our results (which prove strong hardness of approximation for matroid-based) with~\cite{RubinsteinW15,ChawlaM16, CaiZ17} (which provides constant-factor approximations for a single subadditive buyer) is that these constant-factor approximations require the independent items assumption.

\subsection{Summary and Roadmap}
Our main results complete the picture for the~\cite{CaiDW13b} reduction, establishing an approximation-presering reduction from \odp\ to \mdmdp, and moreover that \mdmdp\ is inapproximable within non-trivial factors as soon as we move beyond $c$-demand valuations. Section~\ref{sec:prelim} formally states the problems we study, and recaps prior work~\cite{CaiDW13b} in more detail. Section~\ref{sec:odp} establishes a clean framework to prove when \odp\ is inapproximable. Section~\ref{sec:reduction} provides our reduction and concludes with our main theorem statements. The appendix contains all omitted proofs, along with some examples demonstrating interesting facts about some of our tools along the way.

\section{Preliminaries}\label{sec:prelim}
\subsection{Valuation Functions and Input}\label{sec:valuations}
A \emph{valuation function} takes as input a set $S$ of items and outputs a value $v(S)$. Throughout the paper, $[m]$ will denote the ``base set'' of $m$ items (although our reductions will create additional items). {For ease of notation, and to help the reader parse parameters, we will w.l.o.g. scale all valuation functions so that $v(S) \in \mathbb{N}$ for all $S$ (this is w.l.o.g. when $v(S)$ is rational for all $S$, which is itself w.l.o.g. due to $\varepsilon$-truthful-to-truthful reductions of, e.g.,~\cite[Theorem~5.2]{RubinsteinW15}). Moreover, all valuations considered in this paper are normalized ($v(\emptyset) = 0$), monotone ($v(S \cup T) \geq v(S)$ for all $S,T$), and have no trivial items ($v(S) > 0$ for all $S \neq \emptyset$).\footnote{We will confirm that no trivial items is w.l.o.g. once we define our formal problems.} Finally, when $S$ is a random variable, we will abuse notation and let $v(S):= \mathbb{E}_S[v(S)]$. Below are the two main classes of valuations that our main results reference. Appendix~\ref{app:valuations} contains definitions of related classes of interest. Note that OXS $\subsetneq$ Matroid-based $\subsetneq$ Gross Substitutes $\subsetneq$ Submodular.
\begin{itemize}[noitemsep]
\item OXS: there is a weighted bipartite graph $G$ with nodes $L= [m]$ on the left and $R$ on the right. $v(S)$ is the size of the max weight matching using nodes $S$ on the left and $R$ on the right. When all weights are $1$ or $0$, call this binary OXS.
\item Matroid-based: let $\mathcal{I}$ be independent sets of a matroid on $[m]$, and $w_i$ be weights for each $i\in [m]$. Then $v(S):=\max_{T \subseteq S, T \in \mathcal{I}}\{\sum_{i \in T} w_i\}$. When each $w_i = 1$, call this matroid-rank.
\end{itemize}

\paragraph{Representing Valuation Functions.} All problems we consider require a valuation function to be ``input.'' Our theorem statements will be precise about what input models are assumed, although our results hold for most reasonable input models not explicitly discussed.
\begin{itemize}
\item Value oracle: each $v(\cdot)$ is given via a black box which takes as input a set $S$ and outputs $v(S)$. 
\item Demand oracle: each $v(\cdot)$ is given via a black box which takes as input a vector $\vec{p}$ of prices and outputs a set in $\arg\max_T\{v(T) - \sum_{i \in T}p_i\}$.
\item Explicit input: each $v(\cdot)$ is given explicitly via a circuit or Turing machine which takes as input a set $S$ and outputs $v(S)$. We will only consider succinct representations (that is, circuits/Turing machines which are of polynomial size/runtime for inputs of size $m$). 
\end{itemize}

\subsection{Formal Problem Statements}\label{sec:problems}
The first problem we study is simply Bayesian revenue maximization, but we will be precise with how the input is specified to correctly place it with prior work. We use the language of~\cite{CaiDW13b} when possible to draw connections, although we will drop unnecessary parameters.\\

\noindent \mdmdp($\mathcal{V}$) --- Multi-Dimensional Mechanism Design Problem for class $\mathcal{V}$:\\
\textsc{Input}: an explicit distribution $D$ over $k$ valuation functions in $\mathcal{V}$ (given by listing all $v(\cdot)$ in the support, and the probability with which they are drawn from $D$).\\
\textsc{Output}: for each $v(\cdot)$ in the support of $D$, a (possibly randomized) set $S_v$ and a price $p_v$ such that $v(S_v) - p_v \geq v(S_w)-p_w$ for all $v,w$ in the support of $D$.\\
\textsc{Objective}: maximize $\sum_v \Pr[v\leftarrow D] \cdot p_v$, the expected revenue.\\
\textsc{Approximation}: a solution guarantees an $\alpha$-approximation if $\sum_v \Pr[v \leftarrow D] \cdot p_v \geq \alpha \cdot \opt$.\\

The lens by which we study \mdmdp\ is the following standard optimization problem. \\

\noindent \odp($\mathcal{V}$) --- Optimize Difference Problem for class $\mathcal{V}$:\\
\textsc{Input}: two functions $v(\cdot)$ and $w(\cdot)$, both in $\mathcal{V}$.\\
\textsc{Output}: a (possibly randomized) set, $S$.\\
\textsc{Objective}: maximize $v(S) - w(S)$.\\
\textsc{Approximation}: a solution guarantees an $\alpha$-approximation if $v(S) - w(S) \geq \alpha \cdot \opt$.\footnote{Observe above that if both $v(\cdot)$ and $w(\cdot)$ are subadditive, and $v(T) = 0$, then $v(S \cup T) = v(S)$ for all $S$, while $w(S \cup T) \geq w(S)$ for all $S$. Therefore, it is without loss to remove the items in $T$ from consideration. Similarly, if $w(T) = 0$, then $w(S \cup T) = w(S)$ for all $S$, while $v(S \cup T) \geq v(S)$ for all $S$. Therefore, it is without loss to solve \odp\ after removing $T$, and then add all items in $T$ back at the end. We therefore will pre-process any input to \odp\ by first finding all items $i$ such that $v(\{i\}) = 0$ and removing them, and all items $i$ such that $w(\{i\}) = 0$ and removing them (to add back later). Therefore, it is indeed w.l.o.g. to assume no trivial sets.}\footnote{We will also assume w.l.o.g. that there exists an $S$ for which $v(S) > w(S)$. This is without loss because if we have an algorithm $A$ which succeeds only on such instances, we can first run this algorithm on an arbitrary instance to get a set $T$, and check if $v(T) \geq w(T)$. If so, then output this (and either we were in a case where the algorithm succeeds, or the optimum is $0$ and the algorithm succeeds anyway). If not, then output $\emptyset$ (because $v(T) < w(T)$ would prove that our algorithm failed, and therefore we are in a case where the optimum must be $0$). So we will also assume this w.l.o.g. for our future reductions.}\\

The final problem we study was introduced by~\cite{CaiDW13b}, and shown to have connections to \mdmdp. \sadp\ essentially provides a list of related \odp s, and allows a solution to any of them.\\

\noindent \sadp($\mathcal{V}$) --- Solve Any Differences Problem for class $\mathcal{V}$:\\
\textsc{Input}: a finite list of functions $v_1(\cdot),\ldots, v_k(\cdot)$, all in $\mathcal{V}$.\\
\textsc{Output}: a (possibly randomized) set, $S$.\\
\textsc{Objective}: for some $j \in [k-1]$, have $S$ maximize $v_j(S) - v_{j+1}(S)$.\\
\textsc{Approximation}: a solution guarantees an $\alpha$-approximation if there exists a $j \in [k-1]$ such that $v_j(S) - v_{j+1}(S) \geq \alpha \cdot \max_T\{v_j(T) - v_{j+1}(T)\}$.

\subsection{Recap of~\cite{CaiDW13b}}\label{sec:CDW}
Finally, we briefly recap the main tools/results from~\cite{CaiDW13b} which are relevant for this paper. Recall that the focus of this paper is on hardness, and we establish our results with just a single buyer. Therefore, we will not recap the results of~\cite{CaiDW13b} in their full multi-buyer generality, but just focus on the single-buyer implications. Below, for any valuation class $\mathcal{V}$, $\mathcal{V}^*$ denotes its \emph{conic closure}. That is, $\mathcal{V}^*$ denotes the closure of $\mathcal{V}$ under non-negative linear combinations. Many natural valuation classes (e.g. submodular, XOS, subadditive, additive) are already closed under conic combinations, but others (unit-demand, $c$-demand, OXS, matroid-based, gross substitutes) are not. 

\begin{theorem}[\cite{CaiDW13b}]\label{thm:CDW1} For all $\mathcal{V}$, there is a poly-time, approximation-preserving black-box reduction from \mdmdp($\mathcal{V}$) to \odp($\mathcal{V}^*$). That is, an $\alpha$-approximation algorithm for $\mdmdp(\mathcal{V})$ (in any input model) exists using $\poly(\text{support}(D),m)$ black-box queries to an $\alpha$-approximation algorithm for \odp($\mathcal{V}^*$) (in that same input model), and additional runtime $\poly(\text{support}(D),m)$.
\end{theorem}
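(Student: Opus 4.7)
The plan is to attack via LP duality and the ellipsoid method. First, formulate \mdmdp($\mathcal{V}$) as a linear program with variables $x_{v,S} \geq 0$ (the probability that a buyer of reported type $v$ receives bundle $S$) for each $v$ in the support of $D$ and each $S \subseteq [m]$, together with a price $p_v$ for each $v$. The objective is $\max \sum_v \Pr[v \leftarrow D]\cdot p_v$, subject to the IC constraint $\sum_S x_{v,S} v(S) - p_v \geq \sum_S x_{w,S} v(S) - p_w$ for every pair $(v,w)$, the IR constraint $\sum_S x_{v,S} v(S) - p_v \geq 0$ for every $v$, and the feasibility constraint $\sum_S x_{v,S} \leq 1$ for every $v$. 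This LP has $\poly(k)$ constraints but exponentially many variables, so one cannot solve it directly.

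Next I would pass to the dual, which has a variable $\lambda_{v,w} \geq 0$ for each IC constraint, $\mu_v \geq 0$ for each IR constraint, and $\phi_v \geq 0$ for each feasibility constraint. Writing the stationarity condition for the free variable $p_v$ and then the dual constraint associated with $x_{v,S}$, a routine rearrangement puts the latter in the form
\[\phi_v \;\geq\; c_v\cdot v(S) \;-\; \sum_w \lambda_{w,v}\cdot w(S),\]
where $c_v := \sum_w \lambda_{v,w} + \mu_v$ is automatically nonnegative. The separation task for the dual is, for each $v$, to decide whether some $S$ violates this inequality, i.e.\ whether $\max_S\bigl[c_v\cdot v(S) - \sum_w \lambda_{w,v} w(S)\bigr] > \phi_v$. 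But $f := c_v\cdot v$ lies in $\mathcal{V}^*$ (since $c_v \geq 0$ and $v \in \mathcal{V}$) and $g := \sum_w \lambda_{w,v}\cdot w$ lies in $\mathcal{V}^*$ (since every $\lambda_{w,v} \geq 0$ and every $w \in \mathcal{V}$), so this is \emph{exactly} an instance of \odp($\mathcal{V}^*$).

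Given an exact oracle for \odp($\mathcal{V}^*$), the ellipsoid method then solves the dual LP in $\poly(k,m)$ time, and one recovers an optimal primal $(x_{v,S}, p_v)$ supported on the polynomially many bundles returned by the separator. This primal is by construction an IC/IR randomized mechanism achieving the optimum. To make the reduction \emph{approximation-preserving}, the standard move is to binary-search a target revenue $R$ (over the polynomially-bounded range of achievable revenues), and rewrite the dual as the feasibility question ``does there exist a dual solution certifying primal revenue $\geq R$?'', which one runs ellipsoid against using only the $\alpha$-approximate \odp\ oracle: any $\alpha$-approximately violated constraint is still a genuine violation, while the failure of the approximate separator to find a violation certifies that the rescaled-down target $\alpha R$ is primal-feasible. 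Maximizing $R$ over the search yields a primal mechanism of revenue at least $\alpha\cdot \opt$.

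The main obstacle is precisely this last step: showing that an $\alpha$-approximation for \odp\ translates cleanly (and in a genuinely black-box fashion) to an $\alpha$-approximation for \mdmdp, with no hidden additive error. The care required is that \odp\ is itself an optimization problem, so one must phrase the ellipsoid search so that any approximately optimal \odp\ answer either (i) correctly produces a violated dual constraint at the current iterate, or (ii) certifies feasibility of the rescaled LP; Gr\"otschel--Lov\'asz--Schrijver-style approximate separation gives exactly the handle needed. Once this template is in place, the remaining bookkeeping --- bit-complexity of the ellipsoid run, extracting an explicit primal from the polynomially many separation queries, and interpreting the fractional $x_{v,S}$ as a randomized menu --- is routine.
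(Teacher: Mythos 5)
The paper does not prove Theorem~\ref{thm:CDW1}; it is imported verbatim from~\cite{CaiDW13b} (as the theorem header indicates), so there is no in-paper proof for your attempt to be measured against. That said, your outline is a correct single-bidder specialization of the machinery that~\cite{CaiDW13b} (building on~\cite{CaiDW12a,CaiDW12b}) actually uses: write the revenue LP with exponentially many allocation variables, dualize, and observe that the dual separation problem is exactly a difference-of-conic-combinations optimization, which is \odp($\mathcal{V}^*$). Your dual bookkeeping is correct: the stationarity in $p_v$ gives $c_v = \Pr[v\leftarrow D] + \sum_w \lambda_{w,v} = \sum_w\lambda_{v,w}+\mu_v \geq 0$, so $c_v\cdot v \in \mathcal{V}^*$ and $\sum_w \lambda_{w,v}\cdot w \in \mathcal{V}^*$.

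Where I would push back is on the implicit suggestion that, once one sees the \odp\ connection, the rest is ``routine bookkeeping.'' The approximation-preserving part is the entire technical content of the~\cite{CaiDW12b,CaiDW13b} reduction, and it is not a direct corollary of GLS-style weak separation. The subtle point: when the $\alpha$-approximate \odp\ oracle returns a set $S$ with $c_v v(S) - \sum_w \lambda_{w,v} w(S) \leq \phi_v$, this only certifies $\max_T\{c_v v(T) - \sum_w \lambda_{w,v} w(T)\} \leq \phi_v/\alpha$, i.e.\ that the \emph{rescaled} point $(\lambda,\mu,\phi/\alpha)$ is dual feasible. Turning this into a genuine primal mechanism of revenue $\geq \alpha\cdot\opt$ --- rather than, say, an additive loss or a $1/\poly$ degradation from ellipsoid's tolerance parameter --- requires the ``weird separation oracle'' device of~\cite{CaiDW12b}: one runs ellipsoid against an oracle that, whenever it declares feasibility, simultaneously exhibits a primal solution (supported on the polynomially many bundles queried) whose value sandwiches against the rescaled dual. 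Your binary-search framing gestures at this but does not resolve the directionality (you must recover a primal witness of revenue $\geq \alpha R$, not merely conclude $\opt \leq R/\alpha$). So: right approach, right identification of the crux, but the crux itself is a substantial lemma from the cited works, not an exercise.
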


For a single buyer, the positive applications of Theorem~\ref{thm:CDW1} are not particularly impressive, and imply only that \mdmdp\ can be solved exactly whenever buyers are $c$-demand (which could alternatively be deduced by a simple linear program) --- their main positive results are for multiple buyers. Again, our main result is an approximation-preserving reduction in the other direction, from \odp\ to \mdmdp.

\subsubsection{\cite{CaiDW13b}'s Hardness of Approximation}}
As referenced above,~\cite{CaiDW13b} also proves hardness of approximation for \mdmdp(submodular). We make use of their machinery, which requires several definitions to precisely state. Without repeating the entire~\cite{CaiDW13b} hardness of approximation, it is perhaps impossible to motivate why these \emph{precise} definitions are relevant, as they are technical in nature. However, we do give intuition to parse what the definitions are stating, and roughly the role they serve in prior work.

The first definition, Compatibility, is given below. Immediately afterwards, we provide context to help parse the condition.

\begin{definition}[Compatibility] We say that a list of valuation functions $V = (v_1,\ldots, v_k)$ and a list of (possibly randomized) sets $X = (X_1,\ldots, X_k)$ are \emph{compatible} if:
\begin{itemize}
\item $X$ and $V$ are cyclic monotone. That is, the welfare-maximizing matching of valuations $V$ to allocations $X$ (that is, which maximizes $\sum_{i=1}^k v_i(X_{M(i)})$) is to match $X_i$ to $v_i$ for all $i$. 
\item For any $i < j$, the welfare-maximizing matching of valuations $v_{i+1},\ldots, v_j$ to allocations $X_i,\ldots, X_{j-1}$ is to match allocation $X_\ell$ to valuation $v_{\ell+1}$ for all $\ell$.
\end{itemize}
\end{definition}

One should parse Compatibility as a stronger condition than cyclic monotonicity (indeed, the first bullet is precisely cyclic monotonicity). Cyclic monotonicity requires that a particular matching on the complete bipartite graph is max-weight. Compatibility requires that a particular matching on several proper subgraphs are also max-weight. Intuitively, the particular restrictions in bullet two assert that higher-indexed allocations are ``better'', and higher-index valuations are ``more important'' (and therefore, the welfare-maximizing allocation always gives ``better'' allocations to ``more important'' valuations).

The next technical definition is a restriction on potential inputs to \sadp. Intuitively, this condition serves the following purpose: the~\cite{CaiDW13b} reduction from \sadp\ to \mdmdp\ will take an input to \sadp, and pass it on to \mdmdp, and one part of their proof needs to establish the existence of a high-revenue solution to the produced \mdmdp\ instance (so that any $\alpha$-approximation must also produce high revenue). $C$-compatibility suffices for this (but we will not attempt to explain further why this is the case, and refer the reader to~\cite{CaiDW13b} for more detail).

\begin{definition}[$C$-compatible] A list of valuation functions $(v_1,\ldots, v_k)$ is \emph{$C$-compatible} if there exist integers $1 = Q_1 < \ldots < Q_k$, all at most $2^C$, and allocations $(X_1,\ldots, X_k)$ such that:
\begin{itemize}
\item For all $\ell \in [k-1]$, $X_\ell \in \arg\max\{v_\ell(S) - v_{\ell+1}(S)\}$.
\item $(Q_1 \cdot v_1,\ldots, Q_k \cdot v_k)$ is compatible with $(X_1,\ldots, X_k)$. 
\end{itemize}
\end{definition}

The final technical definition in their reduction is balanced-ness. This condition is used in their reduction to guarantee quality of approximation for the original \sadp\ instance. The intuition to have in mind is that when given as input to \mdmdp\ a distribution of support $k$, it is trivial to get a $1/k$-approximation simply by targeting the valuation in the support which maximizes $v([m]) \cdot \Pr[v \leftarrow D]$ (and setting price $v([m])$ for $[m]$, and no other options). Put another way, it's possible to guarantee a $1/k$ (or comparably poor) approximation to \mdmdp\ without engaging with the instance at all. The purpose of their $d$-balanced property guarantees that sufficiently good approximations to the \mdmdp\ instance produced by their reduction must actually engage the initial \sadp\ instance.

\begin{definition}[$d$-balanced] A list of functions $(v_1,\ldots,v_k)$ is $d$-balanced if $v_k([m]) \leq d\cdot (v_\ell(X)-v_{\ell+1}(X))$ for all $\ell \in [k-1]$, $X = \arg\max_T \{v_\ell(T)-v_{\ell+1}(T)\}$.
\end{definition}

Following the intuition of the preceding paragraph, $d$-balanced aims to upper bound the revenue a seller could get on a \sadp\ instance by simply selling $[m]$ to $v_k(\cdot)$ and ignoring everything else. With these two definitions, we may state the reduction of~\cite{CaiDW13b} from \sadp\ to \mdmdp:\footnote{This is indeed a correct statement of~\cite{CaiDW13b} Theorem~7 after chasing through their \sadp\ choice of parameters.}

\begin{theorem}[\cite{CaiDW13b}]\label{thm:CDW2} Let $A$ be an $\alpha$-approximation algorithm for \mdmdp($\mathcal{V}$). Then a solution to any $C$-compatible instance $(v_1,\ldots, v_k)$ of \sadp($\mathcal{V}$) can be found in polynomial time plus one black-box call to $A$. The solution has the following properties:
\begin{itemize}
\item (Quality) If the \sadp\ input is $d$-balanced, then the solution is an $(\alpha - \frac{(1-\alpha)d}{k-1})$-approximation.
\item (Complexity) If for all $i$, $v_i([m])\leq 2^b$ then $w([m])\leq 2^{b+C}$ for all $w(\cdot)$ input to $A$. Moreover, all probabilities input to $A$ can be written as the ratio of two integers at most $2^{2C}$. 
\end{itemize}
\end{theorem}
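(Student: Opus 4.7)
My plan is to reduce the $C$-compatible \sadp\ instance to \mdmdp\ by feeding a single carefully constructed distribution into $A$, then reading off one of the sets $A$ returns. Given the $C$-compatibility witnesses $1=Q_1<\cdots<Q_k\le 2^C$ and allocations $(X_1,\ldots,X_k)$, I would place the scaled valuation $Q_i v_i$ in the support of $D$ with probability $p_i$ proportional to $1/Q_i$, normalizing so that each $p_i$ is a ratio of positive integers of size at most $2^{2C}$. Scaling $v_i$ by $Q_i\le 2^C$ inflates the range of any input to $A$ by at most a factor of $2^C$, so the complexity bullet of the theorem follows immediately from this construction.

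For the quality bullet, I would first lower-bound the optimal revenue of $D$. Using that $(Q_1v_1,\ldots,Q_kv_k)$ is compatible with $(X_1,\ldots,X_k)$, one builds a feasible IC mechanism that allocates $X_i$ to type $Q_iv_i$ and charges the telescoping cyclic-monotone prices $\sum_{\ell<i}\bigl(Q_\ell v_\ell(X_\ell)-Q_\ell v_\ell(X_{\ell-1})\bigr)$ (with $X_0:=\emptyset$). Compatibility, in its stronger ``window'' form, is exactly what guarantees that every pairwise IC constraint (in both directions) holds, making these prices feasible. Taking expectations against the $p_i$'s and rearranging the double sum, the revenue collapses to a positive combination of the quantities $v_\ell(X_\ell)-v_{\ell+1}(X_\ell)=\max_T\bigl(v_\ell(T)-v_{\ell+1}(T)\bigr)$ for $\ell<k$, so, after the chosen normalization, $\opt$ is at least (a constant factor times) the average of these $k-1$ \sadp-optima.

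Now let $A$ return sets $(S_1,\ldots,S_k)$ and prices $(t_1,\ldots,t_k)$ of revenue at least $\alpha\cdot\opt$. Writing $t_i = t_k+\sum_{j\ge i}(t_j-t_{j+1})$ and swapping the order of summation rewrites the revenue as $P_kt_k+\sum_{j<k}P_j(t_j-t_{j+1})$ with $P_j=\sum_{i\le j}p_i$. The IC constraint for the adjacent pair $(j,j+1)$ gives $t_j-t_{j+1}\le Q_j\bigl(v_j(S_j)-v_j(S_{j+1})\bigr)$ from one side and $t_j-t_{j+1}\ge Q_{j+1}\bigl(v_{j+1}(S_j)-v_{j+1}(S_{j+1})\bigr)$ from the other, while IR gives $t_k\le Q_kv_k([m])$. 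Combining both IC directions along the chain $(S_1,\ldots,S_k)$ lets the cross-terms $v_j(S_{j+1})$ and $v_{j+1}(S_j)$ telescope away, leaving the revenue upper-bounded by a nonnegative combination of ``diagonal'' terms $v_j(S_j)-v_{j+1}(S_j)$ for $j<k$ plus a single residual multiple of $v_k([m])$. Invoking $d$-balancedness, $v_k([m])\le d\cdot\max_T\bigl(v_\ell(T)-v_{\ell+1}(T)\bigr)$ for every $\ell<k$, so the residual term is dominated by $d$ times any of the \sadp-optima. Chaining the lower and upper bounds and averaging over $\ell$ then forces some $j$ with $v_j(S_j)-v_{j+1}(S_j)\ge \bigl(\alpha-\tfrac{(1-\alpha)d}{k-1}\bigr)\cdot\max_T\bigl(v_j(T)-v_{j+1}(T)\bigr)$; that $S_j$ is the \sadp\ set to output.

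The main obstacle will be the cross-term cancellation in the third step: the natural IC inequality produces $v_j(S_j)-v_j(S_{j+1})$, but \sadp\ asks for $v_j(S_j)-v_{j+1}(S_j)$. Turning one into the other requires simultaneously invoking both directions of IC on every consecutive pair along the entire chain $(S_1,\ldots,S_k)$ and exploiting telescoping so that all off-diagonal terms cancel, leaving only the diagonal terms plus the $v_k([m])$ residual that $d$-balancedness is designed to control. The $1/(k-1)$ factor in the guarantee is the unavoidable averaging loss across these $k-1$ consecutive-pair inequalities, and it is precisely because $C$-compatibility constrains the whole chain simultaneously (rather than each pair in isolation) that the telescoping argument can be closed.
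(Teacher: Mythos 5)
The paper does not actually prove this theorem: it is imported (with parameters traced through) from \cite{CaiDW13b}, Theorem~7, as the footnote directly before the statement makes explicit. So there is no in-paper proof to check you against; I will assess the sketch on its own terms.

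Your overall skeleton is right: scale each $v_i$ by $Q_i$, place the scaled valuations in the support of a distribution $D$, lower-bound $\opt$ via the IC mechanism that allocates $X_i$ to type $i$ with telescoping cyclic-monotone prices (compatibility certifying feasibility), then upper-bound the revenue of $A$'s output by adjacent-IC telescoping plus $d$-balancedness, and extract a good $j$ by averaging. However, your specific choice of probabilities does not work. If you literally take $p_i$ proportional to $1/Q_i$, i.e.\ $p_i = (1/Q_i)/\bigl(\sum_j 1/Q_j\bigr)$, then writing $p_i$ as a ratio of integers forces a common denominator $\sum_\ell \prod_{j \neq \ell} Q_j$, which can be of order $2^{C(k-1)}$, not $2^{2C}$, so the complexity bullet fails. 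This is not cosmetic: with that normalization the revenue of your witness mechanism is $\sum_\ell (1-P_{\ell-1})\,Q_\ell\bigl[v_\ell(X_\ell)-v_\ell(X_{\ell-1})\bigr]$ where $P_{\ell-1}=\sum_{i<\ell}p_i$, and the $Q_\ell$ factors only cancel if $1-P_{\ell-1}=1/Q_\ell$ for every $\ell$. That forces $p_\ell = 1/Q_\ell - 1/Q_{\ell+1}$ for $\ell<k$ and $p_k = 1/Q_k$ --- telescoping reciprocals, not normalized reciprocals. With that choice each $p_\ell = (Q_{\ell+1}-Q_\ell)/(Q_\ell Q_{\ell+1})$ is a ratio of integers at most $2^{2C}$, the witness revenue collapses to $\sum_\ell\bigl[v_\ell(X_\ell)-v_\ell(X_{\ell-1})\bigr] = v_k(X_k) + \sum_{\ell<k}\bigl[v_\ell(X_\ell)-v_{\ell+1}(X_\ell)\bigr]$, and the rest of your argument goes through. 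A smaller slip: your telescoping price sums over $\ell<i$ rather than $\ell\leq i$, which sets $t_1=0$ and leaves type $1$'s IR slack; you want $t_i=\sum_{\ell\leq i}Q_\ell\bigl[v_\ell(X_\ell)-v_\ell(X_{\ell-1})\bigr]$ with $X_0:=\emptyset$, so that $t_1=Q_1 v_1(X_1)$.
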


The intended application of Theorem~\ref{thm:CDW2} is then to find a hard instance of \sadp\ which is (a) $(\ll k)$-balanced --- this guarantees that the resulting guarantee on \sadp\ is close to $\alpha$, and (b) $\poly(m)$-compatible --- this guarantees that the input passed on to \mdmdp\ blows up by only a $\poly(m)$ factor. The appealing feature of Theorem~\ref{thm:CDW2} is that it suffices to establish that \mdmdp(submodular functions) is inapproximable within any polynomial factor, by directly constructing a hard instance of \sadp\ with $k=\poly(m)$ valuations which is $(\ll k)$-balanced and $\poly(m)$-compatible. 

The unappealing feature of Theorem~\ref{thm:CDW2} is that \sadp\ is not a particularly natural problem to think about, nor are the compatibility/balanced properties (hence, the need for a substantial preliminary section just to state their result). Additionally, \odp\ is a very special case of \sadp\, whose solution suffice for \mdmdp\, but Theorem~\ref{thm:CDW2} only establishes that \mdmdp\ is hard when \sadp\ is hard for large values of $k$. 

Our work addresses both shortcomings. Theorem~\ref{thm:mainreduction} provides a formal statement, which essentially replaces \sadp\ with \odp\ (truly establishing that the~\cite{CaiDW13b} reduction is tight), and removes any compatibility/balanced requirements on the input instance (allowing greater ease of application to valuation classes significantly more restrictive than submodular).
\section{When is \odp\ Hard?}\label{sec:odp}
Before diving into the technical part of our reduction, we'd first like a clean way to reason about valuation classes for which \odp\ is hard. Becuase \odp\ has a mixed-sign objective, one naturally expects that it is either solvable exactly in poly-time, or hard to even distinguish whether the optimum is non-zero (although there are sometimes exceptions to this intuition~\cite{DaskalakisDW15}). What's not immediately clear is how rich $\mathcal{V}$ needs to be before \odp\ becomes unsolvable in poly-time (e.g. it is solvable for additive functions, but not for submodular, what about in between?). In this section, we show that even \odp(binary OXS) is inapproximable. We begin with a general construction of hard instances:

\begin{definition}[Perturbable] For $v, S$, define $v_S(T):= v(T)$ if $S \neq T$, and $v_S(S) := v(S)-1$. Class $\mathcal{V}$ is \emph{$(x,y)$-perturbable} if there exists a $v(\cdot) \in \mathcal{V}$ such that $v_S(\cdot) \in \mathcal{V}$ for $x$ distinct $S$, and $v([m]) = y$.

$\mathcal{V}$ is efficiently $(x,y)$-perturbable if further there is a computationally-efficient bijection from the $x$ sets to $[x]$, and there is a poly-sized circuit/poly-time Turing machine which computes $v(\cdot)$.
\end{definition}

The parameter $y$ will not be used in this section, and is not necessary to establish when \odp\ is hard, but we will need to reference the parameter in the technical parts of our reduction. Lemma~\ref{lem:perturb} establishes that \odp\ is hard for highly-perturbable ($(x,y)$-perturbable for large $x$) classes. Intuitively, this is because the $x$ \odp\ instances of the form $(v,v_S)$ each have disjoint solutions, but are hard to distinguish. Proofs of the following two lemmas appear in Appendix~\ref{app:odp}.

\begin{lemma}\label{lem:perturb} Let $\mathcal{V}$ be $(x,y)$-perturbable. Then any value- or demand-oracle algorithm for \odp($\mathcal{V}$) which guarantees an $\alpha$-approximation \emph{for any $\alpha > 0$} with probability $q$ makes $\geq \frac{qx-1}{2}$ queries.

If $\mathcal{V}$ is efficiently $(x,y)$-perturbable for $x = \text{exp}(m)$, then no randomized, poly-time algorithm for explicit input guarantees an $\alpha$-approximation for any $\alpha > 0$ w.p. $1/\poly(m)$, unless NP $\subseteq$ RP.
\end{lemma}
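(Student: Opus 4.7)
The plan is to exploit a single observation: in the \odp\ instance $(v,v_S)$ where $S$ is one of the $x$ allowed perturbation sets, $v(T)-v_S(T)$ equals $1$ if $T=S$ and $0$ otherwise, so the optimum is exactly $1$ and is achieved only at $T=S$. Because values are integers, any $\alpha$-approximation with $\alpha>0$ must output $S$ exactly; hence solving \odp\ to any positive approximation is equivalent to identifying the perturbation set. Both parts of the lemma reduce to this identification problem.

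For the oracle bound, I would invoke Yao's minimax principle with the uniform distribution on $S \in [x]$ (using the bijection in the definition of perturbable). Queries to $v$ are independent of $S$ and may be ignored, so only queries to $w=v_S$ need analysis. A value query $v_S(T)$ returns $v(T)$ for every $S \neq T$ and $v(T)-1$ if $S=T$, so it either eliminates exactly one candidate or identifies $S$ completely. For a demand query to $v_S$ at prices $\vec p$, I would use adversarial tiebreaking and return some $T^* \in \argmax_T\{v(T)-\sum_{i\in T}p_i\}$; since $v$ and $v_S$ agree off $S$, such a $T^*$ remains in $\argmax$ for $v_S$ whenever $S \neq T^*$, so again at most a small constant number of candidates is distinguished per query (with constants absorbing the borderline case when $\argmax$ for $v$ is a tie). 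After $k$ queries, therefore, all but at least $x - O(k)$ candidates share the same transcript, the algorithm outputs the same set $S^*$ on each of them, and at most one such candidate equals $S^*$; a careful accounting gives success probability at most $(2k+1)/x$, which forces $k \geq (qx-1)/2$.

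For the explicit-input hardness I would reduce from SAT via the Valiant--Vazirani isolation lemma. Given an $n$-variable SAT instance $\phi$, pick $m$ so that $x=\exp(m) \geq 2^n$ and apply Valiant--Vazirani to obtain a formula $\phi'$ that, with probability $\Omega(1/n)$, has a unique satisfying assignment when $\phi$ is satisfiable and none when $\phi$ is not. Using the efficient bijection from the perturbable sets to $[x]$, embed $\{0,1\}^n$ into the perturbable sets and construct a poly-size circuit for $w$ which, on input $T$, returns $v(T)-1$ if $T$ encodes a satisfying assignment of $\phi'$ and $v(T)$ otherwise. When $\phi'$ has a unique satisfier $a^*$, $w = v_{S_{a^*}} \in \mathcal{V}$ by efficient perturbability; when $\phi'$ is unsatisfiable, $w=v \in \mathcal{V}$; in either case $(v,w)$ is a valid explicit-input \odp\ instance. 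Invoking the hypothesized algorithm and checking whether the returned set encodes a satisfying assignment of $\phi$ then decides SAT with probability $1/\poly(n)$ per call, so standard amplification places SAT in RP and therefore NP $\subseteq$ RP.

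The main obstacle is the demand-oracle step of the oracle lower bound: one must choose the adversary's tiebreaking rule carefully and handle the corner case where $\argmax$ for $v$ is a tie, since then a response $T^* \in \argmax$ for $v$ can fail to lie in $\argmax$ for $v_{T^*}$, potentially leaking more than one bit about $S$ in a single query. The remaining pieces --- the value-query bound, the Valiant--Vazirani reduction, and verifying that the constructed $w$ lies in $\mathcal{V}$ precisely when $\phi'$ has at most one satisfier --- are essentially routine.
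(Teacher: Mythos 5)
Your proposal is correct and follows essentially the same approach as the paper: argue against a uniform prior over the perturbation set $S$ (the paper uses deferred decisions, you invoke Yao's minimax, which amounts to the same thing here), observing that each value or demand query rules out at most one candidate $S$; and for explicit input, reduce from unique-SAT (the paper) or, equivalently, from SAT via Valiant--Vazirani (you). The demand-oracle corner case you flag as the main obstacle is in fact a non-issue: fix any price vector $\vec{p}$ and let $T^*\in\argmax_T\{v(T)-\sum_{i\in T}p_i\}$ be the adversary's response; for every $S\neq T^*$, passing from $v$ to $v_S$ only lowers the surplus of $S$ and leaves all other sets' surpluses unchanged, so $T^*$ remains optimal under $v_S$ regardless of ties, and thus exactly the single candidate $S=T^*$ can be distinguished by any one query.
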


\begin{lemma}\label{lem:boxs} Binary OXS is efficiently $(\binom{m}{m/2},m/2)$-perturbable.\footnote{So every superclass of Binary OXS (including Matroid-based) is efficiently $(\binom{m}{m/2},m/2)$-perturbable as well.} 
\end{lemma}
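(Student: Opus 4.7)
The plan is to build a single binary-OXS ``base'' valuation $v$ with $v([m]) = m/2$, and then, for each size-$m/2$ subset $S \subseteq [m]$, exhibit a separate bipartite graph whose matching function realizes $v_S$. For the base, I would take $L = [m]$, $R$ of size $m/2$, and the complete bipartite graph between them; then $v(T) = \min(|T|,m/2)$, so $v([m]) = m/2$. The $\binom{m}{m/2}$ sets of size exactly $m/2$ are the natural candidates for the perturbable sets, since these are precisely the ``smallest saturating'' sets, and dropping the value on exactly one such $S$ by one should still be implementable by a small modification of the underlying bipartite graph.

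For a fixed $S$ with $|S|=m/2$, the key construction replaces one universal right vertex with a specialized one: let $R_S$ consist of $m/2-1$ vertices each adjacent to all of $[m]$, together with one extra vertex $r_S^\ast$ adjacent only to $[m]\setminus S$. Since $|R_S| = m/2$, the matching function is capped at $m/2$. I then need to verify four cases for the maximum matching $v_S'(T)$ in $G_S$:
(i) $T = S$: only the $m/2-1$ universal vertices can be matched, giving $m/2-1$;
(ii) $T \neq S$ with $|T| \geq m/2$: since $|T| \geq |S|$ and $T \neq S$, a simple pigeonhole gives $T \cap ([m]\setminus S) \neq \emptyset$, so one element of $T$ matches $r_S^\ast$ and $m/2-1$ others match the universal vertices, for value $m/2$;
(iii) $T \subseteq S$ with $|T| < m/2$: all of $T$ matches into the $m/2-1$ universal vertices, giving $|T|$;
(iv) $T \not\subseteq S$ with $|T| < m/2$: match one element of $T\setminus S$ to $r_S^\ast$ and the rest to universal vertices, giving $|T|$. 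Together these cases show $v_S'(T) = v(T)$ for $T \neq S$ and $v_S'(S) = v(S) - 1$, so $v_S' = v_S$ and $v_S$ lies in binary OXS.

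For the ``efficient'' part of the definition, $v(T) = \min(|T|,m/2)$ is clearly poly-time computable, and a bijection between the size-$m/2$ subsets of $[m]$ and $\{1,\ldots,\binom{m}{m/2}\}$ is given by the standard combinatorial ranking/unranking via Pascal's triangle, which runs in time polynomial in $m$. The only step requiring real care is the case analysis of Step~(ii)--(iv); the main subtlety there is the pigeonhole observation that if $T \neq S$ and $|T| \geq |S|$ then $T$ must hit $[m]\setminus S$, which is exactly why $r_S^\ast$ can help on every set \emph{except} $S$ itself. All remaining steps are routine.
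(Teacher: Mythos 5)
Your proof is correct and follows essentially the same approach as the paper's: the same base valuation (complete bipartite graph with $m/2$ universal right vertices, giving $v(T)=\min(|T|,m/2)$), and the same perturbation (for each size-$m/2$ set $S$, restricting one right vertex to be adjacent only to $[m]\setminus S$, which is exactly what the paper describes as ``removing all edges from $S$ to the top right-hand node''). Your four-case analysis is just a slightly more granular version of the paper's two-case check, and the ranking/unranking observation matches the paper's lexicographic indexing.
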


Lemmas~\ref{lem:perturb} and~\ref{lem:boxs} together immediately conclude that \odp(Binary OXS) is inapproximable within any non-zero factor with $\poly(m)$ value or demand queries, or within poly-time unless NP $\subseteq$ RP.
\section{From \odp\ to \sadp}\label{sec:reduction}
Now that we know \odp\ is hard even for fairly basic combinatorial valuations, we wish to leverage this to establish that the particular \sadp\ instances needed for Theorem~\ref{thm:CDW2} are also hard. Again, the goal of this section is to provide a reduction from \odp\ to \sadp, while also chasing through the parameters it implies for a full reduction from \odp\ to \mdmdp. Before providing our reduction, let's get some intuition for the main challenges through some simple failed attempts. Recall that an $\alpha$ approximation for \mdmdp\ implies an $(\alpha - \frac{(1-\alpha)d}{k-1})$ approximation for \sadp. We want this to be as close to $\alpha$ as possible, so the properties we care about are:
\begin{itemize}
\item How balanced is the resulting \sadp\ instance? (Smaller $d$ is better).
\item How many functions are input to the \sadp\ instance? (Larger $k$ is better).
\item Is the instance $\poly(m)$-compatible? (Necessary for the reduction to be valid --- otherwise it will produce inputs to \mdmdp\ of super-polynomial size).
\end{itemize}

\vspace{2mm}\noindent\textbf{Failed Attempt 1.} The obvious first attempt is to just observe that \odp\ is a special case of \sadp\ and simply ``reduce'' from \odp\ to \sadp\ by copying the original instance itself into Theorem~\ref{thm:CDW2}. This reduction is clearly correct, but let's examine what parameters we get from this reduction, using our $(\binom{m}{m/2},m/2)$-perturbable $v$ as a representative instance:
\begin{itemize}
\item In that construction, $w([m]) = m/2$, and $\max_{T}\{v(T) - w(T)\} = 1$, so it is $(m/2)$-balanced.
\item There are two functions input to \odp, so $k=2$.
\end{itemize}

From above, $d/(k-1) = m/2$. So even if the instance were $\poly(m)$-compatible, an $\alpha$-approximation for \mdmdp\ wouldn't even imply a non-trivial guarantee for any $\alpha \leq \frac{1}{1 + 2/m}$. Recall again the intuition for this: in the~\cite{CaiDW13b} reduction, $k$ will be the size of the support of the produced $d$ for \mdmdp. If $k=2$, it's trivial to get a $1/2$-approximation without engaging \sadp\ at all. So in order to get stronger hardness of approximation, we need much larger $k$.

\vspace{2mm}\noindent\textbf{Failed Attempt 2.} Here is a next attempt, which takes $k$ as large as desired. Let $v,w$ be the given \odp\ instance, and define $v_\ell(\cdot):= (k-\ell)\cdot v(\cdot) + (\ell-1)\cdot w(\cdot)$. Observe that for any $\ell$, $v_\ell(\cdot)-v_{\ell+1}(\cdot) = v(\cdot)-w(\cdot)$, so \emph{every} consecutive difference is solving exactly the \odp\ instance we care about, and an $\alpha$-approximation for this \sadp($v_1,\ldots, v_k$) will indeed yield an $\alpha$-approximation for \odp($v,w$). Let's check the parameters when we apply Theorem~\ref{thm:CDW2} with our hard \odp\ instance:
\begin{itemize}
\item $w([m]) = m/2$, and our $v_k([m]) = (k-1)\cdot w([m]) = (k-1)m/2$. Also, $\max_T\{v(T)-w(T)\} = 1$, so $\max_T\{v_\ell(T) -v_{\ell+1}(T)\} = 1$ for all $\ell$, and the instance is $(k-1)m/2$-balanced.
\item We can set $k$ as we please.
\end{itemize}

Again, even if the instance were $\poly(m)$-compatible, an $\alpha$-approximation for \mdmdp\ still wouldn't imply non-trivial guarantees for any $\alpha \leq \frac{1}{1 + 2/m}$ (because still $d/(k-1) = m/2$). What we learn from these first two failed attempts is that we need to somehow let $k$ be large in our construction (have many functions), while simultaneously not letting $v_k([m])$ grow too big.

\vspace{2mm}\noindent\textbf{Failed Attempt 3.} Our final failed attempt will also demonstrate one of our key insights: introduce extra items in the \sadp\ instance. Specifically (for this construction), if we are initially given an \odp\ instance on $m$ items, and wish to produce a \sadp\ instance with $k$ valuation functions, add an additional $k$ items. Call the initial items $M$ and the additional items $K$. For any set $S \subseteq M \cup K$, write it as $S=S_M \sqcup S_K$ (where $S_M = S \cap M, S_K = S \cap K$).\footnote{As $S_M \cap S_K = \emptyset$, $S_M \sqcup S_K = S_M \cup S_K$. We use the disjoint union notation here (and in several other places) to remind the reader that these sets are always disjoint.} Define $v_\ell(S)$ to be equal to $v(S_M)$ if and only if $|S_K| \geq \ell$, and $w(S_M)$ if and only if $|S_K| \leq \ell-1$ (note that this is a somewhat bizarre valuation function, which essentially treats the items in $K$ as a bit to decide whether the items in $M$ are valued using $v(\cdot)$ or $w(\cdot)$). It is not hard to see that any $\alpha$-approximation on this \sadp\ instance implies an $\alpha$-approximation on our desired \odp\ instance. Let's again compute the parameters of the reduction for our hard \odp\ instance.

\begin{itemize}
\item $w([m]) = m/2$, and our $v_k([m]) = w([m]) = m/2$. Also, it is not hard to see that for a given $\ell$, $\max_T\{v_\ell(T) - v_{\ell+1}(T)\} = 1$. So the instance is $(m/2)$-balanced.
\item We can set $k$ as we please. 
\end{itemize}

Now, we're in good shape with parameters: we could (for example) set $k = m^c$ for any constant $c$, and establish that any $m^{-c+1}$-approximation for \mdmdp\ instances with $m^c$ items implies a $(>0)$-approximation for \odp\ on $m$ items, which we ruled out in Section~\ref{sec:odp}. The catch is that our produced \sadp\ instance has \emph{bizarre} valuation functions. So all we have done is shown that \mdmdp(bizarre valuation class) is hard. 

\subsection{Our Reduction}
The previous examples illustrate the reasoning one needs to go through to analyze a reduction, and also identify some of the technical challenges. The two main (good) ideas we saw from the failed attempts were (a) introducing extra items and (b) ``truncating'' the valuations in a way that let $k$ grow large while keeping $d$ small. The challenge we did not yet overcome was how to do this in a way that preserves membership in interesting valuation classes, and we did not attempt to address Compatibility. We'll present our construction in two steps. The first adds additional items, but does not yet truncate.\\

\noindent\textbf{Reduction Step One: Scaled Disjoint Unions.}\\
\textsc{Input}: $v(\cdot), w(\cdot)$, both in $\mathcal{V}$ (input to \odp($\mathcal{V}$)), $k$ (desired number of valuations for \sadp).\\
\textsc{Output}: $v_1(\cdot),\ldots, v_k(\cdot)$. Each $v_\ell(\cdot)$ takes as input subsets of $(k-1)m$ items. Partition these items into $M_1\sqcup\ldots\sqcup M_{k-1}$, with each $|M_\ell| = m$. For any $S \subseteq M_1 \sqcup\ldots\sqcup M_{k-1}$, write it as $S:= S_1 \sqcup \ldots\sqcup S_{k-1}$ (with each $S_\ell:=S \cap M_\ell$). Define $v_\ell(S):= \sum_{i \geq \ell} (k+i) \cdot v(S_i) + \sum_{i < \ell} (k+i) \cdot w(S_i)$.\\

That is, we make $k-1$ copies of the $m$ items, have each $v_\ell(\cdot)$ behave either as $v$ or $w$ (scaled) on each copy, and then sum them. Observe that $v_\ell(\cdot)$ switches from $v$ to $w$ exactly at copy $\ell$, and that each $v_\ell(\cdot)$ is a \emph{disjoint union} of valuations in $\mathcal{V}$, formally defined below. This operation happens to preserve membership in the valuation classes we care about. 

For scaling, observe that all classes we consider except for Binary OXS and matroid-rank (because they insist on binary values) are closed under non-negative scaling. The scaling in each partition will not become relevant until the very end when we need to prove that the produced instance is $\poly(m)$-compatible. 

\begin{definition}[Disjoint Union of Valuations] The \emph{disjoint union} of two valuations $v(\cdot)$ and $w(\cdot)$ each on items $M$ creates two copies $M_1 \sqcup M_2$, and outputs a valuation $z(\cdot)$ such that $z(S):= v(S \cap M_1) + w(S \cap M_2)$. 
\end{definition}

\begin{fact}Additive, Binary OXS, OXS, Matroid-rank, Matroid-Based, Gross Substitutes, Submodular, XOS, and Subadditive valuations are closed under Disjoint Union. Unit-demand/$c$-demand are not. 
\end{fact}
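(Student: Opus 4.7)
The plan is to handle each class by exhibiting, for $v,w$ on disjoint item sets $M_1,M_2$, an explicit witness that the disjoint union $z(S)=v(S\cap M_1)+w(S\cap M_2)$ lies in the same class, and then to refute closure for unit-demand/$c$-demand with a tiny example. The positive cases split naturally into two groups based on whether the defining witness is ``structural'' (weights, matroids, graphs) or ``functional'' (a submodular/XOS/subadditive inequality); Gross Substitutes is the one case that needs a genuinely demand-oracle argument and is where I expect the only real work.

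For the structural classes I would simply combine witnesses on disjoint ground sets. If $v,w$ are additive with weights $\{a_i\}_{i\in M_1},\{b_j\}_{j\in M_2}$, the concatenated weight vector witnesses $z$ as additive. For matroid-rank and matroid-based, use the direct sum of matroids: $\mathcal{I}_1\oplus\mathcal{I}_2:=\{I_1\sqcup I_2:I_\ell\in\mathcal{I}_\ell\}$ is a matroid on $M_1\sqcup M_2$, and the weights concatenate, giving a matroid-based (resp.\ matroid-rank, when both input weights are $0/1$) witness for $z$. For OXS and binary OXS, take the disjoint union of the two bipartite graphs $G_1,G_2$ (both left and right vertex sets disjoint); any maximum matching on the combined graph decomposes into independent matchings on each side, so the max-weight matching value on a set $S$ is exactly $v(S\cap M_1)+w(S\cap M_2)$.

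For the functional classes I would verify the defining inequality directly, exploiting disjointness so that each side decomposes across $M_1$ and $M_2$. Submodularity, subadditivity, and monotonicity of $z$ follow from adding the two corresponding inequalities for $v$ and $w$ applied to $S\cap M_i,T\cap M_i$. For XOS, if $v(S)=\max_a\langle a,\chi_S\rangle$ over additive clauses $\{a\}$ on $M_1$ and $w(S)=\max_b\langle b,\chi_S\rangle$ on $M_2$, then $z(S)=\max_{a,b}\langle (a,b),\chi_S\rangle$ over the product family of clauses on $M_1\sqcup M_2$. The main obstacle is Gross Substitutes; here I would argue via the demand-oracle definition: for any price vector $p$, a demanded bundle for $z$ at $p$ is $D_1\sqcup D_2$ where $D_i$ is a bundle demanded by the $i$-th valuation at the restriction $p|_{M_i}$. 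When a single price $p_j$ for $j\in M_1$ rises, the demand behavior on $M_2$ is unchanged (prices on the other partition weren't touched), and on $M_1$ every item still demanded at the old price remains demanded at the new price by the gross-substitutes property of $v$; the same holds symmetrically. This verifies the GS condition for $z$.

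Finally, to show unit-demand and $c$-demand are not closed under disjoint union it suffices to take $|M_1|=|M_2|=1$ with $v(\{a\})=w(\{b\})=1$ (both trivially unit-demand, hence $c$-demand). Then $z(\{a,b\})=2$, but any $c$-demand valuation $u$ satisfies $u(\{a,b\})=\max_{T\subseteq\{a,b\},|T|\le c}\sum_{i\in T}u(\{i\})\le c\cdot\max_i u(\{i\})$; for $c=1$ this forces $u(\{a,b\})\le 1<2$, contradicting unit-demand, and for general $c$ the same construction with $c{+}1$ singletons on each side yields $z$ requiring $2c{+}2$ items to achieve the maximum, exceeding the $c$-demand bound.
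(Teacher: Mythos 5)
The paper states this Fact without proof (it notes that the omitted proofs are standard and that the result is only used to justify Reduction Step One), so there is no paper argument to compare against; you are on your own, and your proof is correct. The structural cases (additive via concatenated weights, matroid-rank/matroid-based via matroid direct sum, OXS via disjoint union of bipartite graphs) are exactly the natural witnesses, and the functional cases (submodular, subadditive, XOS) decompose cleanly because $S\cap T$ and $S\cup T$ split across $M_1$ and $M_2$; the XOS product-of-clauses observation is the right one.

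For Gross Substitutes, your argument is sound but slightly narrower in phrasing than it needs to be: you only discuss a single-price increase, whereas the paper's definition quantifies over arbitrary $q\geq p$. The two are known to be equivalent, but you can avoid invoking that equivalence entirely by stating the decomposition at the level of demand sets. Since $z(S)-\sum_{i\in S}p_i$ separates into $\bigl(v(S\cap M_1)-\sum_{i\in S\cap M_1}p_i\bigr)+\bigl(w(S\cap M_2)-\sum_{i\in S\cap M_2}p_i\bigr)$, any demanded bundle for $z$ at $p$ is $X_1\sqcup X_2$ with $X_i$ demanded on side $i$ at $p|_{M_i}$, and for general $q\geq p$ one applies GS of $v$ and $w$ separately to $p|_{M_i}\leq q|_{M_i}$ to get $Y_i\supseteq (P\cap M_i)\cap X_i$; then $Y_1\sqcup Y_2$ is demanded at $q$ and contains $P\cap X$. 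Also, when you write ``every item still demanded at the old price remains demanded,'' you should say ``every demanded item \emph{whose price is unchanged} remains in some demanded bundle'' --- the item whose price rose may legitimately drop out.

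Your counterexample for unit-demand is fine, and you correctly recognize that the two-singleton example fails for $c$-demand with $c\geq 2$ (since the disjoint union of two value-$1$ singletons is additive, hence $2$-demand), so the padded example with $c+1$ value-$1$ singletons per side is needed: the disjoint union $z$ has $z([2c+2])=2c$, but any $c$-demand function with those singleton values caps at $c$.
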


So we're in good shape so far in the sense that if we start with (say) matroid-based valuations in our \odp\ instance, we get back matroid-based valuations in our \sadp\ instance. But we have not yet made any progress towards bounding the balanced-ness. Our next step is to properly truncate the valuations to make the valuations more balanced. We define two truncations below, which may inspire different related work. However, only Item Truncation is necessary for our main result, so we will focus the remaining presentation and formal statements on Item Truncation.\\

\noindent\textbf{Reduction Step Two: Item Truncation.}\\
\textsc{Input}: $v_\ell(\cdot)$, a valuation function and $y$, a desired truncation.\\
\textsc{Output}: $v'_\ell(\cdot)$, defined so that $v'_\ell(S):= \max_{T \subseteq S, |T| \leq y} \{v_\ell(T)\}$. \\

\noindent\textbf{Alternate Reduction Step Two: Value Truncation.}\\
\textsc{Input}: $v_\ell(\cdot)$, a valuation function and $x$, a desired truncation.\\
\textsc{Output}: $v'_\ell(\cdot)$, defined so that $v'_\ell(S):= \min\{v_\ell(S),x\}$.\\

We prove in Appendix~\ref{app:reduction} that matroid-based valuations are closed under item truncation, as this is the class/truncation we'll use for our main results. We state afterwards which other classes are closed under each truncation (proofs omitted, as we will not use these results elsewhere).

\begin{lemma}\label{lem:matroid} The class of matroid-rank valuations is closed under item truncation for any $y$.
\end{lemma}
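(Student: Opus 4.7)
\textbf{Proof Plan for Lemma~\ref{lem:matroid}.}

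The plan is to identify $v'$ as the rank function of the \emph{truncation} of the original matroid. Let $v(\cdot)$ be a matroid-rank valuation given by the matroid $\mathcal{M}=([m],\mathcal{I})$, so that $v(S)=\max_{T\subseteq S,\,T\in\mathcal{I}}|T|$. First I would rewrite the item-truncated valuation by swapping the two nested maxima: since $v(T)=\max_{T'\subseteq T,\,T'\in\mathcal{I}}|T'|$, and any such independent $T'$ is itself a legal choice of the outer variable $T$ (it has size at most $|T|\leq y$), a one-line check gives
\[
v'(S)\;=\;\max_{T\subseteq S,\,|T|\leq y}v(T)\;=\;\max_{T'\subseteq S,\,T'\in\mathcal{I},\,|T'|\leq y}|T'|.
\]
So $v'$ just counts the largest independent subset of $S$ of size at most $y$.

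Next, define $\mathcal{I}^{(y)}:=\{T\in\mathcal{I}:|T|\leq y\}$. I would verify that $([m],\mathcal{I}^{(y)})$ is itself a matroid: it is downward closed because $\mathcal{I}$ is and because shrinking a set does not increase its size, and the exchange axiom is inherited from $\mathcal{M}$ since if $A,B\in\mathcal{I}^{(y)}$ with $|A|<|B|\leq y$, the element $e\in B\setminus A$ added via the exchange axiom of $\mathcal{M}$ yields $A\cup\{e\}\in\mathcal{I}$ with $|A\cup\{e\}|\leq y$, hence $A\cup\{e\}\in\mathcal{I}^{(y)}$. This is the standard ``matroid truncation'' construction. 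Combined with the display above, $v'(S)$ equals the rank of $S$ in this truncated matroid (with all weights equal to $1$), which is exactly what it means to be a matroid-rank valuation.

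I do not anticipate a real obstacle here: the only mildly substantive step is confirming the exchange axiom for the truncation, and that is immediate from the observation that adding a single element can raise the cardinality by only one. The rewriting of $v'$ via the double-max swap is the conceptual core, and everything else is bookkeeping. If desired, one could alternatively give a purely analytic proof by noting $v'(S)=\min(v(S),y)$ and verifying the matroid axioms for the level sets of this function, but the truncated-matroid presentation is cleaner and directly exhibits the witnessing matroid.
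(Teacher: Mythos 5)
Your proof is correct and takes essentially the same approach as the paper: both identify the item-truncated valuation as the rank function of the truncated matroid $\{T \in \mathcal{I} : |T| \leq y\}$ and verify the matroid axioms (downward closure plus the augmentation/exchange property). You are somewhat more explicit than the paper about the max-swap step showing $v'(S)$ equals the rank in the truncated matroid, which the paper dismisses as "easy to see."
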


\begin{fact}\label{fact:value} Unit-demand, Submodular, XOS, and Subadditive are closed under value truncation for all $x$. Additive, $k$-demand, Binary OXS, OXS, Matroid-rank, Matroid-based, Gross Substitutes are not (there exist $x$, and a function $v(\cdot)$ in each of these classes for which value-truncating $v(\cdot)$ at $x$ is no longer in that class).
\end{fact}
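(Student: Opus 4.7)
The plan is to split the statement into its positive-closure half (Subadditive, Unit-demand, Submodular, XOS) and its non-closure half (the remaining seven classes), and handle each half separately. For the closures, each case follows by direct manipulation of the defining property. Subadditive uses the elementary inequality $\min\{a+b,x\}\le \min\{a,x\}+\min\{b,x\}$ applied to $v(S\cup T)\le v(S)+v(T)$. Unit-demand uses the pointwise identity $\min\{\max_{i\in S}v(\{i\}),x\}=\max_{i\in S}\min\{v(\{i\}),x\}$, so $v'$ is unit-demand with singleton values $\min\{v(\{i\}),x\}$. Submodular uses the marginal-value characterization: for $S\subseteq T$ and $i\notin T$, verify $v'(S\cup\{i\})-v'(S)\ge v'(T\cup\{i\})-v'(T)$ by case analysis on which of $v(S),v(S\cup\{i\}),v(T),v(T\cup\{i\})$ exceed $x$, the only nontrivial case being $v(T)<x\le v(T\cup\{i\})$, which closes via submodularity and monotonicity of $v$. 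XOS follows by writing $v=\max_\ell a_\ell$ for additive $a_\ell$, observing $v'=\max_\ell \min\{a_\ell,x\}$, noting each $\min\{a_\ell,x\}$ is a value-truncation of an additive (hence submodular) function and so submodular by the submodular case, hence XOS, and then invoking XOS closure under max.

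For the negative direction I exhibit explicit pairs $(v,x)$ in each class. \emph{Additive / $k$-demand} (for $k\ge 2$): take $v(\{1\})=v(\{2\})=1$ and $x=1$, so that $v'(\{1,2\})=1\ne 2=v'(\{1\})+v'(\{2\})$. \emph{Matroid-rank}: the rank function of the uniform matroid $U_{2,4}$ truncated at $x=3/2$ takes the non-integer value $3/2$, but matroid-rank functions are integer-valued. \emph{Binary OXS and OXS}: take $v$ to be the rank function of the partition matroid on three blocks of size two (natively binary OXS via the obvious tripartite graph), and truncate at $x=2$. A K\"onig-theorem argument shows that in any (weighted) bipartite realization, the two items of each block must share a single right neighbor with weight-$1$ edges and no other positive-weight edges, which then forces three distinct right nodes across the three blocks and hence a matching of weight $3$ on three items from distinct blocks, contradicting $v' \le 2$. \emph{Matroid-based and Gross Substitutes}: take $v$ to be the weighted rank function of $U_{2,4}$ with weights $(1,2,3,4)$ (OXS, hence in both classes), and truncate at $x=5$. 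Matroid-based fails because singletons pin $w'_i=i$, but then $v'(\{2,4\})=5$ is incompatible with either $w'_2+w'_4=6$ (if $\{2,4\}$ is independent) or $\max(w'_2,w'_4)=4$ (if not). GS fails because M-concavity breaks at $S=\{1,4\}$, $T=\{2,3\}$, $i=1$: one computes $v'(S)+v'(T)=10$, while $v'(\{4\})+v'(\{1,2,3\})=9$ and the two possible swap bounds $v'(\{2,4\})+v'(\{1,3\})=9$ and $v'(\{3,4\})+v'(\{1,2\})=8$ are all strictly smaller.

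The main obstacle is the Binary OXS and OXS step, where ruling out any weighted bipartite-graph realization of the truncated partition-matroid function requires the careful K\"onig/structural argument sketched above; the difficulty is that weights can a priori be arbitrary, so one must chase the implications of $v'(\{i,j\})=1$ for same-block pairs to force a rigid star-shaped weight pattern before deriving the final contradiction from a triple drawn across three blocks. The submodular case split and the M-concavity calculation for the GS counterexample are short direct verifications, and the remaining negative cases are essentially one-line computations.
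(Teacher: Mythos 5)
The paper states this fact with ``proofs omitted, as we will not use these results elsewhere,'' so there is no argument of its own to compare against; your proof therefore stands or falls on its own. On its own terms it is essentially correct. The four positive closures all go through: the scalar inequality $\min\{a+b,x\}\le\min\{a,x\}+\min\{b,x\}$ handles subadditive, the $\min$--$\max$ interchange handles unit-demand, the case split on where $x$ sits relative to $v(S)\le v(S\cup\{i\}),v(T)\le v(T\cup\{i\})$ handles submodular (and the only case requiring submodularity of $v$ is indeed $v(T)<x\le v(T\cup\{i\})$, with a benign sub-split on whether $v(S\cup\{i\})\ge x$), and the XOS argument is sound once one notes that each $\min\{a_\ell,x\}$ is \emph{monotone} submodular, hence XOS, and XOS is closed under finite max. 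The negative cases also check out: the additive/$c$-demand pair is immediate; the weighted-$U_{2,4}$ computations for matroid-based (the ``$v'(\{2,4\})=5$ is pinned to neither $6$ nor $4$'' argument) and for GS (the M$^\natural$-concavity failure at $S=\{1,4\},T=\{2,3\},i=1$ with all three exchange bounds equal to $9,9,8<10$) are correct; and the K\"onig-style argument for Binary OXS/OXS is valid, since the constraints $v'(\{i\})=1$ and $v'(\{i,j\})=1$ for same-block $\{i,j\}$ do force each block to attach, with all its positive weight, to a single shared right node, after which a cross-block triple must have matching weight $3>2$.

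The one point that deserves an explicit caveat is matroid-rank. Your counterexample is forced to use the non-integer threshold $x=3/2$, and this is not an accident: for any matroid rank function $r$ and any \emph{integer} $k$, the function $\min\{r,k\}$ is exactly the rank function of the $k$-truncation of the underlying matroid, which is again a matroid, so matroid-rank is closed under value truncation at every integer threshold. Under the paper's convention that all valuations (hence all values of $v'$) lie in $\mathbb{N}$, no counterexample exists, and the Fact's claim for matroid-rank holds only if non-integer $x$ is permitted. This contrasts sharply with your Binary OXS example, which already breaks at the integer threshold $x=2$ (reflecting the fact that the class of transversal matroids is not closed under matroid truncation), so the non-integer escape hatch is genuinely specific to matroid-rank. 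You should state this explicitly rather than letting $x=3/2$ pass silently, since a reader following the paper's integer-valuation convention would otherwise reasonably conclude the matroid-rank claim is false.
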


\begin{fact}\label{fact:item} Unit-demand, Matroid-rank, Gross Substitutes, XOS, and Subadditive are closed under item truncation for all $y$.\footnote{Of these, the claim for Gross Substitutes (like most properties of Gross Substitutes) is non-trivial to establish, and uses the fact that Gross Substitutes functions are \emph{well-layered}~\cite{Leme17}.} Additive, $k$-demand, Binary OXS, OXS, and Submodular are not. 
\end{fact}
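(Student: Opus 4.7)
The plan is to recognize the item-truncated valuation as the rank function of the classical \emph{matroid truncation} at rank $y$. Let $v(\cdot)$ be a matroid-rank valuation on $[m]$ induced by a matroid $M$ with independent sets $\mathcal{I}$, so that $v(S)=\max_{T\subseteq S,\, T\in\mathcal{I}}|T|$. The item truncation at parameter $y$ is $v'(S)=\max_{T\subseteq S,\,|T|\le y} v(T)$.

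The first step is a simple rewriting. Expanding $v(T)$ inside the outer max gives
\[
v'(S)\;=\;\max_{T\subseteq S,\,|T|\le y}\ \max_{T'\subseteq T,\, T'\in\mathcal{I}}\ |T'|.
\]
Any such $T'$ is a subset of $S$, lies in $\mathcal{I}$, and has $|T'|\le|T|\le y$; conversely, any $T'\subseteq S$ with $T'\in\mathcal{I}$ and $|T'|\le y$ is admissible as $T=T'$. Hence
\[
v'(S)\;=\;\max\bigl\{|T'|\ :\ T'\subseteq S,\ T'\in\mathcal{I},\ |T'|\le y\bigr\}.
\]

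The second step is to invoke the standard fact from matroid theory that the family
\[
\mathcal{I}'\;:=\;\{T\in\mathcal{I}\ :\ |T|\le y\}
\]
is itself the collection of independent sets of a matroid on $[m]$ — namely, the \emph{truncation} of $M$ to rank $y$. (Downward closure of $\mathcal{I}'$ is immediate from that of $\mathcal{I}$; the exchange axiom is inherited from $M$ because any two maximal sets in $\mathcal{I}'$ contained in a common $S$ either both have size $<y$, in which case they are bases of $M|_S$ and exchange via $M$, or both have size exactly $y$ and exchange trivially.) Thus the displayed formula for $v'(S)$ is exactly the rank function of the matroid truncation, witnessing $v'$ as a matroid-rank valuation.

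I do not expect a significant obstacle here; the only content beyond bookkeeping is the appeal to matroid truncation, which is classical. The one thing worth being careful about in writing is the collapsing of the nested maxima — verifying both inclusions of the set over which the max is taken — since without this step one might worry that $v'$ is only a ``truncation of a rank function,'' as opposed to the rank function of a truncated matroid.
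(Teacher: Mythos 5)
Your argument for the matroid-rank case is correct, and it is essentially the paper's own argument: Lemma~\ref{lem:matroid} uses exactly this observation, that item truncation of a matroid-rank (or matroid-based) function is the rank function of the truncated matroid $\mathcal{I}_y=\{T\in\mathcal{I}:|T|\le y\}$. The collapsing of nested maxima is also handled correctly.

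However, Fact~\ref{fact:item} is a much broader statement than the piece you address. It makes ten distinct claims: closure under item truncation for unit-demand, matroid-rank, Gross Substitutes, XOS, and subadditive, and \emph{non}-closure for additive, $k$-demand, binary OXS, OXS, and submodular. You prove only the matroid-rank claim. The others are not minor. The paper's own footnote flags the Gross Substitutes case as ``non-trivial to establish'' and requiring the well-layered property of~\cite{Leme17}, so that claim in particular calls for an actual argument rather than a one-liner. The XOS and subadditive closures are more elementary but still need a line of reasoning (e.g.\ that item-truncating an XOS function yields a max of $y$-demand functions, which is again XOS). And each negative claim requires an explicit counterexample: the paper devotes a dedicated appendix (``Example: Submodular is not closed under Item Truncation'') to the submodular one, giving a concrete four-element monotone submodular $v$ whose truncation at $y=2$ violates submodularity. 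Similar counterexamples are needed for additive, $k$-demand, binary OXS, and OXS. Your proposal does not acknowledge that it treats only one of the fact's assertions, which is the main gap.
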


Again, recall that Lemma~\ref{lem:matroid} is all that's necessary for our main results (as hardness for matroid-rank implies hardness for all superclasses). Facts~\ref{fact:value} and~\ref{fact:item} are included for the sake of better understanding the truncation operations. One curious conclusion of Fact~\ref{fact:item} is that the property of being closed under item-truncation ``skips'' submodular functions. Note also that value-truncation preserves submodularity but not Gross Substiutes, whereas item-truncation preserves Gross Substitutes but not submodularity.
\subsection{Correctness, Balancedness, and Compatibility}

Now, we want to establish that our reduction correctly solves the initial \odp\ instance, and also examine the parameters of the produced \sadp\ instance. Our full reduction takes an \odp\ instance and first puts it through the Scaled Disjoint Union reduction with parameter $k$ to get $k$ valuations each on $(k-1)m$ items. Then, it puts each of these valuations through item truncation with parameter $m$. For a given \odp\ instance $(v,w)$, call the resulting \sadp\ instance $IT(v,w)$. Appendix~\ref{app:reduction} contains a proof of Lemma~\ref{lem:correctnessitem}, and Appendix~\ref{app:value} contains analogous claims for value truncation. 

\begin{lemma}\label{lem:correctnessitem} Let $S$ be an $\alpha$-approximation to the \sadp\ instance $IT(v,w)$. Then with an additional $O(k^2m)$ runtime/value queries (to the item-truncated values), an $\alpha$-approximation to the \odp\ instance $(v,w)$ can be found.
\end{lemma}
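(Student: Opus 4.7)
The plan is to use the fact that the difference $v_j(\cdot) - v_{j+1}(\cdot)$ is supported entirely on slot $j$, so that an approximately-optimal maximizer of $v_j(\cdot)$ inside $S$ yields, by projection onto $M_j$, an approximately-optimal ODP solution.

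First I would verify that $\max_T\{v'_j(T) - v'_{j+1}(T)\} = (k+j)\opt$, where $\opt := \max_R\{v(R) - w(R)\}$. The lower bound is witnessed by taking $T = R^*$ viewed as a subset of $M_j$, where $R^*$ optimizes ODP: then $|T| \leq m$, so item-truncation is vacuous and $v'_j(T) - v'_{j+1}(T) = v_j(T) - v_{j+1}(T) = (k+j)(v(R^*) - w(R^*)) = (k+j)\opt$. For the upper bound, let $U := \argmax_{T' \subseteq T,\, |T'| \leq m} v_j(T')$, so that $v'_j(T) = v_j(U)$; since $U$ is itself a valid candidate in the definition of $v'_{j+1}(T)$, we have $v'_{j+1}(T) \geq v_{j+1}(U)$, and hence $v'_j(T) - v'_{j+1}(T) \leq v_j(U) - v_{j+1}(U) = (k+j)(v(U \cap M_j) - w(U \cap M_j)) \leq (k+j)\opt$.

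Next, for each $j \in [k-1]$ I would compute a set $U_j \subseteq S$ with $|U_j| \leq m$ and $v_j(U_j) = v'_j(S)$, set $R_j := U_j \cap M_j$ (viewed as a subset of $[m]$), and evaluate $v(R_j) - w(R_j)$ via two item-truncated queries ($v'_1(R_j)/(k+j)$ and $v'_k(R_j)/(k+j)$, which equal $v(R_j)$ and $w(R_j)$ because $|R_j| \leq m$). Repeating the telescoping above shows that, for the witness index $j$ satisfying $v'_j(S) - v'_{j+1}(S) \geq \alpha(k+j)\opt$, we have $v_{j+1}(U_j) \leq v'_{j+1}(S)$ (since $U_j \subseteq S$ is a valid candidate of size $\leq m$), so $(k+j)(v(R_j) - w(R_j)) = v_j(U_j) - v_{j+1}(U_j) \geq v'_j(S) - v'_{j+1}(S) \geq \alpha(k+j)\opt$, giving $v(R_j) - w(R_j) \geq \alpha\opt$. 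Since the witness $j$ is not known a priori, I would output the $R_j$ maximizing $v(R_j) - w(R_j)$ over all $j \in [k-1]$.

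The main obstacle is recovering such a $U_j$ using only value queries to $v'_j$. For this I propose a greedy shrinking procedure: initialize $U = S$ and, sweeping through items $i \in S$ once, remove $i$ from $U$ whenever $v'_j(U \setminus \{i\}) = v'_j(U)$. The removal rule preserves the invariant $v'_j(U) = v'_j(S)$. To show $|U| \leq m$ at termination, suppose instead that $|U| > m$; then by definition of $v'_j$ there is some $T' \subsetneq U$ with $|T'| \leq m$ and $v_j(T') = v'_j(U)$, so for any $i \in U \setminus T'$ we have $v'_j(U \setminus \{i\}) \geq v_j(T') = v'_j(U)$, contradicting that $i$ was not removed. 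Hence $|U| \leq m$, which forces $v_j(U) = v'_j(U) = v'_j(S)$, so we may take $U_j := U$. Each shrink uses $|S| \leq (k-1)m$ value queries to $v'_j$, and we repeat for each of $k-1$ indices $j$, yielding the claimed $O(k^2m)$ query bound.
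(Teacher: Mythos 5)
Your proof is correct and takes essentially the same approach as the paper: greedily shrink the \sadp\ solution to size at most $m$ so that truncation becomes vacuous, observe that the difference $v_j - v_{j+1}$ on a small set is supported only on slot $j$, and output the best projection $R_j$ over all $j$. The only cosmetic differences are that you additionally prove the matching upper bound $\max_T\{v'_j(T)-v'_{j+1}(T)\}\le (k+j)\opt$ (not needed for the lemma) and spell out explicitly how to recover $v(R_j)-w(R_j)$ from truncated-value queries, a detail the paper leaves implicit.
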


Next, we want to see how well-balanced the instance is. Importantly, observe that the bound in Lemma~\ref{lem:balance} does not grow with $k$.

\begin{lemma}\label{lem:balance} For any \odp\ instance $(v,w)$, the instance $IT(v,w)$ is $2mw([m])$-balanced. 
\end{lemma}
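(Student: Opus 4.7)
My approach is to verify $d$-balancedness directly by bounding, for each $\ell \in [k-1]$, the ratio $v'_k([(k-1)m])\,/\,(v'_\ell(X)-v'_{\ell+1}(X))$ where $X$ is the argmax. This decomposes into two calculations: an upper bound on $v'_k$ on the full item set, and a lower bound on the maximum difference. The target bound $2m\,w([m])$ will follow from simple arithmetic once both calculations are in place.

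For the upper bound on $v'_k$, I would unfold the definition: since $v_k(S)=\sum_{i=1}^{k-1}(k+i)\cdot w(S\cap M_i)$ and item truncation restricts to $|T'|\le m$, any feasible $T'$ meets at most $\min(m,k-1)$ partitions non-trivially (each non-empty intersection burns at least one of the $m$ allowed items, and there are only $k-1$ partitions). Monotonicity of $w$ gives $w(T'\cap M_i)\le w(M_i)=w([m])$, and each coefficient satisfies $k+i\le 2k-1$. Multiplying through yields $v'_k([(k-1)m]) \le (2k-1)\min(m,k-1)\,w([m])$.

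For the lower bound on the max difference, my candidate $X^{(\ell)}$ is $S^*:=\arg\max_S\{v(S)-w(S)\}$ embedded into partition $M_\ell$. Two ``truncation-transparency'' observations carry the argument: (i) since $|X^{(\ell)}|\le m$, item truncation is inactive at $v'_\ell(X^{(\ell)})$, so this equals $v_\ell(X^{(\ell)})=(k+\ell)\,v(S^*)$; and (ii) since $X^{(\ell)}\subseteq M_\ell$ and $v_{\ell+1}$ acts as $(k+\ell)\,w(\cdot)$ on $M_\ell$, the monotonicity of $w$ forces the inner maximum in $v'_{\ell+1}(X^{(\ell)})$ to be attained at the whole of $X^{(\ell)}$, so $v'_{\ell+1}(X^{(\ell)})=(k+\ell)\,w(S^*)$. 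Hence the difference is exactly $(k+\ell)(v(S^*)-w(S^*))$, which by integrality of $v,w$ and the w.l.o.g.\ positivity assumption (stated in the \odp\ preliminaries) is at least $k+\ell\ge k+1$.

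Combining, the worst ratio over $\ell\in[k-1]$ is at most $\frac{(2k-1)\min(m,k-1)\,w([m])}{k+1}$, and elementary arithmetic ($(2k-1)/(k+1)<2$ and $\min(m,k-1)\le m$) bounds this by $2m\,w([m])$, proving the claim. The main obstacle is the transparency argument in step (ii): one must verify that the truncation slack is genuinely non-binding for $v'_{\ell+1}(X^{(\ell)})$, i.e.\ that no strictly smaller subset $T''\subsetneq X^{(\ell)}$ with $|T''|\le m$ yields a strictly larger $v_{\ell+1}$-value that would increase $v'_{\ell+1}(X^{(\ell)})$ and destroy the clean identity above. This is precisely where monotonicity of $w$ enters; once it is in hand, the rest of the argument is counting.
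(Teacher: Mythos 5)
Your proof is correct and takes essentially the same approach as the paper: upper-bound $v'_k$ on the full item set by counting how many copies a set of at most $m$ items can touch, lower-bound each consecutive difference by $(k+\ell)$ via embedding the $\arg\max_S\{v(S)-w(S)\}$ set into copy $M_\ell$, and divide. Your upper bound $(2k-1)\min(m,k-1)w([m])$ is marginally tighter than the paper's $2kmw([m])$, and the ``truncation-transparency'' observations you carefully flag are the same facts the paper's proof uses implicitly.
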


So at this point we have a reduction from \odp\ to very balanced instances of \sadp, which also preserve valuation class membership for matroid-based valuation functions. Our last steps are to ensure that we can indeed compute value queries for our generated functions, and also to ensure that the resulting instance is $\poly(m)$-compatible. Lemma~\ref{lem:valuequeries} below is necessary for the following reason: Lemma~\ref{lem:correctnessitem} requires value queries \emph{on values generated for $IT(v,w)$}. However, our input to \odp\ only provides access to $v,w$, which may not suffice for this. 

\begin{definition} Say that a list of valuation functions $V$ is \emph{query-sufficient} for a list $W$ if (a) it is possible to execute value queries for every function in $W$ using polynomially-many value queries for functions in $V$ and (b) it is possible to write succinct circuits/Turing machines to compute value queries for every function in $W$ using succinct circuits/Turing machines computing value queries for every function in $V$.
\end{definition}

Matroid-based $v,w$, due to connections with greedy algorithms, are indeed query-sufficient.

\begin{lemma}\label{lem:valuequeries} 
When $v,w$ are matroid-based, $v,w$ are query-sufficient for $IT(v,w)$.
\end{lemma}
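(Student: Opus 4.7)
Each function in $IT(v,w)$ has the form $v'_\ell(S) = \max_{T \subseteq S,\, |T| \leq m} v_\ell(T)$, where
\[
v_\ell(T) \;=\; \sum_{i=1}^{k-1} (k+i)\cdot u_i(T \cap M_i), \qquad u_i = \begin{cases} v & i \geq \ell,\\ w & i < \ell.\end{cases}
\]
I would show that one value query to $v'_\ell$ can be answered using $\poly(km)$ value queries to $v$ and $w$ plus $\poly(km)$ extra computation; condition (b) of query-sufficiency then follows by wrapping this algorithm as a single poly-time Turing machine (or poly-size circuit) that calls the machines/circuits for $v$ and $w$ as black-box subroutines.

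Write $v$ as the weighted rank function of a matroid $(\mathcal{I}_v,\{w^v_j\})$ on $[m]$ and likewise $w$. From value queries to $v$ alone I extract the two primitives needed to run matroid algorithms: (a) the weights, because the no-trivial-items assumption forces $\{j\} \in \mathcal{I}_v$ and hence $v(\{j\}) = w^v_j$; and (b) an independence oracle for $\mathcal{I}_v$, since positivity of the weights gives $T \in \mathcal{I}_v$ iff $v(T) = \sum_{j \in T} v(\{j\})$. The analogous primitives for $\mathcal{I}_w$ are obtained identically.

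The core step is to recognize $v'_\ell(S)$ as a single weighted matroid optimization. Unfolding each $u_i(T \cap M_i) = \max\{\sum_{j \in A_i} w^{u_i}_j : A_i \subseteq T \cap M_i,\ A_i \in \mathcal{I}_{u_i}\}$ and combining with the outer max over $T$ yields
\[
v'_\ell(S) \;=\; \max \Bigl\{\sum_i (k+i) \sum_{j \in A_i} w^{u_i}_j \,:\, A_i \subseteq S \cap M_i,\ A_i \in \mathcal{I}_{u_i},\ \textstyle\sum_i |A_i| \leq m\Bigr\}.
\]
For the ``$\leq$'' direction, given a feasible $T$ the matroid-optimal $A_i \subseteq T \cap M_i$ satisfy $\sum_i |A_i| \leq |T| \leq m$; for ``$\geq$'', given feasible $(A_i)$ set $T = \bigsqcup_i A_i$, which has $|T| = \sum_i |A_i| \leq m$ and achieves the stated value. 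The feasible sets on the right are precisely the independent sets of the rank-$m$ truncation of the direct-sum matroid $\bigoplus_i \mathcal{I}_{u_i}$ restricted to $S$, which is itself a matroid. By (a) and (b), its independence oracle and (nonnegative) weight function are computable from $\poly(km)$ value queries to $v,w$, so the weighted-matroid greedy algorithm computes $v'_\ell(S)$ in $\poly(km)$ time.

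The main obstacle is the combined-max step above: one must be careful that replacing $T \cap M_i$ by a matroid-independent $A_i$ does not silently violate the global cardinality constraint and, conversely, that any collection of $A_i$'s with $\sum |A_i| \leq m$ can be realized by a single feasible $T$. Both directions fall out once $T = \bigsqcup_i A_i$ is used to reconcile the two maxes; after that, extracting independence oracles from value queries and invoking standard matroid greedy finishes the argument for both (a) and (b).
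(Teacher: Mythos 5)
Your proposal is correct and follows essentially the same route as the paper: both recognize that each $v'_\ell$ is the weighted rank function of the rank-$m$ truncation of the direct-sum matroid $\bigoplus_i \mathcal{I}_{u_i}$ and compute it via the standard greedy algorithm, simulating the needed oracle calls with $O(1)$ value queries to $v$ and $w$ per step (possible since the working set never exceeds $m$ items). The only cosmetic difference is that you set up abstract independence/weight oracles and invoke matroid greedy as a black box, whereas the paper states the greedy procedure directly in terms of marginal values $v_\ell(T\cup\{e\})-v_\ell(T)$; these are equivalent.
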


Lemma~\ref{lem:valuequeries} confirms that if our \odp\ instance is given via explicit input or value queries, then this is enough to simulate value queries on each $v_\ell(\cdot)$ we construct. The last step is compatibility. We'll begin with a general sufficient condition for an instance to be Compatible, and finally establish that our instances satisfy this condition.

\begin{lemma}\label{lem:compatible} Let $v_1,\ldots, v_k$ be such that there exists allocations $X_1,\ldots, X_k$ satisfying:
\begin{enumerate}
\item For all $\ell$, $X_\ell \in \arg\max_S\{v_\ell(S) - v_{\ell+1}(S)\}$.
\item $v_\ell(X_i) \leq C_1$ for all $\ell,i$.
\item $v_\ell(X_\ell) \geq v_\ell(X_{\ell-1}) + 1 \geq v_\ell(X_i)+2$, for all $\ell > 1, i \leq \ell-2$.
\end{enumerate}
Then $(v_1,\ldots, v_k)$ is $k\log_2(C_1)$-Compatible.
\end{lemma}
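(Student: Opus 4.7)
The plan is to exhibit integers $1 = Q_1 < Q_2 < \cdots < Q_k \le 2^{k \log_2 C_1} = C_1^k$ for which $(Q_1 v_1, \ldots, Q_k v_k)$ is compatible with $(X_1, \ldots, X_k)$. I take $Q_i := 1 + C_1 + \cdots + C_1^{i-1} = (C_1^i - 1)/(C_1 - 1)$. These are positive integers, strictly increasing since $Q_{i+1} - Q_i = C_1^i$, and $Q_k \le C_1^k$ because $C_1 \ge 2$ (condition~3 with $\ell \ge 3$ forces $v_\ell(X_\ell) \ge 2$). The central arithmetic identity I will use is $Q_{i+1} = C_1 Q_i + 1$, equivalently $Q_i - C_1 Q_{i-1} = 1$.

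Both bullets of Compatibility will be verified by the same ``swap toward the intended matching'' exchange argument. For cyclic monotonicity, let $M \neq \mathrm{id}$ be a permutation and put $i^* := \max\{i : M(i) \neq i\}$, $j := M(i^*) < i^*$, $j' := M^{-1}(i^*) < i^*$ (both lie in $\{1,\ldots,i^*-1\}$ because $M$ permutes $\{1,\ldots,i^*\}$). Define $M'$ by swapping the images at $i^*$ and $j'$ so that $M'(i^*) = i^*$; then $i^*(M') < i^*(M)$. I will show the welfare increment
\[
\Delta \;=\; Q_{i^*}\bigl(v_{i^*}(X_{i^*}) - v_{i^*}(X_j)\bigr) + Q_{j'}\bigl(v_{j'}(X_j) - v_{j'}(X_{i^*})\bigr)
\]
is strictly positive in three cases: (i) $j = j' = i^*-1$ (adjacent transposition), where I invoke $X_{i^*-1} \in \argmax(v_{i^*-1} - v_{i^*})$ from condition~1 to get $v_{i^*-1}(X_{i^*-1}) - v_{i^*-1}(X_{i^*}) \ge -g$ for $g := v_{i^*}(X_{i^*}) - v_{i^*}(X_{i^*-1}) \ge 1$, yielding $\Delta \ge g(Q_{i^*} - Q_{i^*-1}) \ge 1$; (ii) $j \le i^*-2$, where condition~3 gives a gap of $2$ at $i^*$ and the loss is at most $C_1 Q_{i^*-1}$, so $\Delta \ge 2 Q_{i^*} - C_1 Q_{i^*-1} = Q_{i^*} + 1$; (iii) $j = i^*-1$ but $j' \le i^*-2$, where condition~3 gives a gap of $1$ at $i^*$ and the loss is at most $C_1 Q_{i^*-2}$, so $\Delta \ge Q_{i^*} - C_1 Q_{i^*-2} = 1 + C_1^{i^*-1}$. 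Iterating the swap terminates at $\mathrm{id}$ with strictly increasing welfare.

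The second bullet reduces to an essentially identical argument. For any $i < j$ and bijection $\sigma : \{i+1,\ldots,j\} \to \{i,\ldots,j-1\}$ different from the intended $\sigma_*(\ell) = \ell - 1$, let $\ell^* := \max\{\ell : \sigma(\ell) \neq \ell - 1\}$. The restriction of $\sigma$ to $\{i+1,\ldots,\ell^*\}$ is a bijection onto $\{i,\ldots,\ell^*-1\}$, so $\sigma(\ell^*) \neq \ell^*-1$ forces $\sigma(\ell^*) \le \ell^* - 2$; condition~3 then gives a gap of at least $1$ at $\ell^*$, and swapping the images at $\ell^*$ and $\sigma^{-1}(\ell^*-1)$ produces a welfare increment of at least $Q_{\ell^*} - C_1 Q_{\ell^*-1} = 1 > 0$. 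Iteration converges to $\sigma_*$, completing the verification.

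The main obstacle is reconciling two competing requirements: the $Q_i$ must grow fast enough that each swap produces a strictly positive welfare increment, yet $Q_k$ must not exceed $C_1^k = 2^{k \log_2 C_1}$. A naive bound on the per-swap loss would sum over all lower positions, requiring $Q_i > C_1 \sum_{j < i} Q_j$ and forcing $Q_i \sim (C_1+1)^{i-1}$, which blows past the budget. The resolution is that any single swap affects only two indices, so pairwise domination $Q_i > C_1 Q_{i-1}$ is enough; the geometric-sum choice $Q_i = (C_1^i - 1)/(C_1 - 1)$ realizes this tightly, and the one case where pairwise domination alone fails to give strict improvement---the adjacent transposition---is rescued by separately invoking the $\argmax$ property of condition~1.
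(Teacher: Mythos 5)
Your proof is correct, and it takes a genuinely different route from the paper's. The paper chooses multipliers $Q_i = (C_1+1)^{i-1}$ and runs a crude downward induction: to show $v_\ell$ must be matched to $X_\ell$, it bounds the total welfare obtainable from \emph{all} lower-indexed edges by $\sum_{i<\ell}(C_1+1)^{i-1}C_1 < (C_1+1)^{\ell-1}$, which is dominated by the loss $Q_\ell \cdot 1$ from misplacing $v_\ell$. You instead run a local exchange argument: any non-identity permutation admits a single swap, affecting only two positions, that strictly increases welfare and decreases the top misplaced index. Because a swap touches only two indices, you only need the pairwise recurrence $Q_i = C_1 Q_{i-1} + 1$, realized tightly by the geometric sum $Q_i = (C_1^i - 1)/(C_1-1)$. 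This is a real improvement: your $Q_k < C_1^k = 2^{k\log_2 C_1}$ holds whenever $C_1 \ge 2$, whereas the paper's $(C_1+1)^{k-1}$ actually \emph{exceeds} $C_1^k$ once $k$ is moderately large (e.g.\ $C_1 = 2, k = 3$ gives $9 > 8$, and the ratio diverges as $k \to \infty$), so your choice of multipliers repairs an arithmetic slip in the published proof. One small remark: in your case~(i) the invocation of condition~1 is unnecessary --- the same generic bound $\Delta \ge Q_{i^*}(v_{i^*}(X_{i^*}) - v_{i^*}(X_{i^*-1})) - C_1 Q_{i^*-1} \ge Q_{i^*} - C_1 Q_{i^*-1} = 1$ already gives strict improvement, so all three cases reduce to the single pairwise identity. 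Also note that both your argument and the paper's implicitly require $C_1 \ge 2$ (you derive this from condition~3 with $\ell \ge 3$, which presumes $k \ge 3$); for $k \le 2$ with $C_1 = 1$ the statement is vacuously unsatisfiable, a boundary case neither proof addresses, so this is not a defect specific to your writeup.
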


Our last step is to establish that our $IT(v,w)$ instance satisfies the hypotheses of Lemma~\ref{lem:compatible}.

\begin{corollary}\label{cor:fulldeal} $IT(v,w)$ is $k\log_2(2k\max\{v([m]),w([m])\})$-Compatible.
\end{corollary}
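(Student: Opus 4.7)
My plan is to invoke Lemma~\ref{lem:compatible} on $IT(v,w) = (v_1,\ldots,v_k)$ with a canonical choice of witness allocations. Let $\hat{v}_\ell$ denote the pre-truncation Scaled Disjoint Union valuation (so $v_\ell(S) = \max_{T\subseteq S,\,|T|\leq m}\hat{v}_\ell(T)$), and let $T^*\in\arg\max_{T\subseteq[m]}\{v(T)-w(T)\}$. By the \odp\ WLOG we have $v(T^*)-w(T^*)\geq 1$, and by ``no trivial items'' (which forces $T^*\neq\emptyset$) we have $w(T^*)\geq 1$. Define $X_\ell$ to be the image of $T^*$ under the canonical bijection $[m]\cong M_\ell$ for $\ell\in\{1,\ldots,k-1\}$, and set $X_k = M_{k-1}$. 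The proof then amounts to verifying the three hypotheses of Lemma~\ref{lem:compatible} with $C_1 := 2k\max\{v([m]),w([m])\}$.

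For Hypothesis~1, I will show $v_\ell(S)-v_{\ell+1}(S)\leq (k+\ell)(v(T^*)-w(T^*))$ for every $S$: letting $T\subseteq S$ with $|T|\leq m$ achieve the truncated max for $v_\ell(S)$, the bound $v_{\ell+1}(S)\geq \hat{v}_{\ell+1}(T)$ yields
\[
v_\ell(S) - v_{\ell+1}(S) \;\leq\; \hat{v}_\ell(T) - \hat{v}_{\ell+1}(T) \;=\; (k+\ell)(v(T_\ell)-w(T_\ell)) \;\leq\; (k+\ell)(v(T^*)-w(T^*)),
\]
with equality attained at $S=X_\ell$ since truncation is inert when $|X_\ell|\leq m$. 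For Hypothesis~2, each $X_i$ with $i\leq k-1$ lies entirely in $M_i$, so $v_\ell(X_i)\in\{(k+i)v(T^*),(k+i)w(T^*)\}$ depending on whether $i\geq\ell$; and $X_k=M_{k-1}$ also avoids truncation (since $|M_{k-1}|=m$), so $v_\ell(X_k)=(2k-1)v([m])$ when $\ell\leq k-1$ and $v_\ell(X_k)=(2k-1)w([m])$ when $\ell=k$. In every case $v_\ell(X_i)\leq (2k-1)\max\{v([m]),w([m])\}\leq C_1$.

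For Hypothesis~3 with $\ell\leq k-1$, one computes $v_\ell(X_\ell)-v_\ell(X_{\ell-1}) = (k+\ell)(v(T^*)-w(T^*))+w(T^*)\geq 1$, and $v_\ell(X_{\ell-1})-v_\ell(X_i) = (\ell-1-i)w(T^*)\geq 1$ for $i\leq \ell-2$. The main obstacle is the case $\ell=k$: the required inequality $v_k(X_k)=(2k-1)w([m])\geq v_k(X_{k-1})+1=(2k-1)w(T^*)+1$ reduces to $w([m])>w(T^*)$. In the degenerate case $w([m])=w(T^*)$, monotonicity of $w$ implies $v([m])-w([m])\geq v(T^*)-w(T^*)$, so $T^*=[m]$ is also a maximizer and $\odp(v,w)=v([m])-w([m])$ is computable from a single pair of value queries, letting us bypass the reduction entirely. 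Discarding this trivial case, Lemma~\ref{lem:compatible} delivers $k\log_2(C_1) = k\log_2(2k\max\{v([m]),w([m])\})$-Compatibility, as required.
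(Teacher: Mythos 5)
Your proposal takes the same underlying route as the paper — invoke Lemma~\ref{lem:compatible} with witness allocations $X_\ell$ concentrated on the $\ell$-th copy $M_\ell$ — but you carry it out more carefully in two ways, and in doing so you uncover a genuine soft spot in the paper's own argument.

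First, the paper's proof never actually says what $X_k$ is: its formula ``$v_\ell(X_\ell) = (k+\ell)\cdot v(A^*)$'' is meaningless at $\ell = k$, since there are only $k-1$ copies $M_1,\ldots,M_{k-1}$ and $v_k$ does not evaluate any copy with $v$. You notice that $X_k$ must still be supplied (Hypothesis~3 of Lemma~\ref{lem:compatible} is stated for all $\ell > 1$, including $\ell = k$) and you choose the natural $X_k = M_{k-1}$. Second, with that choice the inequality $v_k(X_k) \geq v_k(X_{k-1}) + 1$ becomes $(2k-1)w([m]) \geq (2k-1)w(T^*) + 1$, which requires $w([m]) > w(T^*)$. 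You correctly point out this can fail: e.g., $v(S) = 2|S|$, $w(S) = |S|$ has $T^* = [m]$ as the unique maximizer, $w(T^*) = w([m])$, and indeed $\max_X v_k(X) = (2k-1)w([m]) = v_k(X_{k-1})$, so no choice of $X_k$ satisfies Hypothesis~3. So the corollary, read literally and unconditionally, is false in that degenerate regime. Your patch — note that monotonicity then forces $[m]$ to be an \odp\ optimum, so the instance is solvable by one pair of value queries and need not be routed through the reduction at all — is a sound way to save Theorem~\ref{thm:mainreduction} (e.g., always include $[m]$ among the candidates returned by the recovery step of Lemma~\ref{lem:correctnessitem}), but you should be explicit that this makes the Corollary itself conditional rather than proving it as stated. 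Your verification of Hypothesis~1 via the truncation-achieving set $T$ (bounding $v_{\ell+1}(S) \geq \hat{v}_{\ell+1}(T)$) is also cleaner than the paper's ``$v_\ell(S) - v_{\ell+1}(S) \leq v_\ell(S\cap M_\ell) - v_{\ell+1}(S \cap M_\ell)$'', which doesn't obviously survive the item truncation since $v-w$ is not monotone; your bound sidesteps that. Overall: same skeleton as the paper, but with a gap correctly identified and patched, plus a more rigorous Hypothesis~1 argument.
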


\subsection{Putting Everything Together}
With all the pieces now in place, we can now prove Theorem~\ref{thm:mainreduction}. We briefly note that Theorem~\ref{thm:mainreduction} also holds after replacing item truncation with value truncation (and the proof simply replaces Lemma~\ref{lem:correctnessitem} with Lemma~\ref{lem:correctnessvalues}).

\begin{theorem}\label{thm:mainreduction} Let $\mathcal{V}$ be closed under non-negative scaling, disjoint union, and item truncation, and let $(v,w)$ be query-sufficient for $IT(v,w)$ whenever $v,w \in \mathcal{V}$. Let also $A$ be an $\alpha$-approximation algorithm for \mdmdp($\mathcal{V}$) using either value oracles or explicit input. Then for any integer $k \geq 2$, given black-box access $A$, there is a $\poly(km)$-time algorithm for \odp($\mathcal{V}$) making a single black-box call to $A$ plus an additional $\poly(km)$ value queries to the \odp\ input with the following properties:
\begin{itemize}
\item (Quality) On \odp\ input $(v,w)$, the algorithm produces an $(\alpha - \frac{(1-\alpha)2mw([m])}{k-1})$-approximation.
\item (Complexity) Each value $v_\ell(\cdot)$ input to $A$ takes as input subsets of $[(k-1)m]$, and satisfies $v_\ell([(k-1)m]) \leq (2k\max\{v([m]),w([m])\})^k$. Moreover, all probabilities input to $A$ can be written as the ratio of two integers at most $(2k\max\{v([m]),w([m])\})^{2k}$.
\end{itemize}
\end{theorem}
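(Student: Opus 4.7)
The plan is to assemble the lemmas already established in this section into a single pipeline. Given an \odp\ instance $(v,w) \in \mathcal{V}^2$, I first construct the \sadp\ instance $IT(v,w)$ by applying scaled disjoint union (with parameter $k$) followed by item truncation (with parameter $m$). The closure of $\mathcal{V}$ under non-negative scaling, disjoint union, and item truncation ensures every resulting valuation $v_\ell$ lies in $\mathcal{V}$, so $IT(v,w)$ is a legitimate input to \sadp($\mathcal{V}$).

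Next, I invoke Corollary~\ref{cor:fulldeal} to certify $IT(v,w)$ as $C$-compatible with $C = k\log_2(2k\max\{v([m]),w([m])\})$, and Lemma~\ref{lem:balance} to certify it as $d$-balanced with $d = 2mw([m])$. These together let me apply Theorem~\ref{thm:CDW2} with the \mdmdp\ algorithm $A$: a single black-box call to $A$ produces a solution to the \sadp\ instance whose approximation ratio is at least $\alpha - \frac{(1-\alpha)d}{k-1} = \alpha - \frac{(1-\alpha)2mw([m])}{k-1}$, matching the quality claim. Lemma~\ref{lem:correctnessitem} then converts this \sadp\ solution into an \odp\ solution with the same approximation ratio, using $O(k^2 m)$ additional value queries on the item-truncated valuations; the query-sufficiency hypothesis ensures these queries can be simulated in $\poly(km)$ time from queries to the original $v,w$.

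For the complexity bullet, I would carefully track how Theorem~\ref{thm:CDW2}'s parameters propagate. Item truncation to $m$ items implies that $v_\ell([(k-1)m])$ is attained on at most $m$ items spread across the $k-1$ scaled copies, each scaled by a factor at most $2k-1$, so $v_\ell([(k-1)m]) \leq (2k-1)\max\{v([m]),w([m])\}$. Plugging $b = \log_2(2k\max\{v([m]),w([m])\})$ into the bound $v'([(k-1)m]) \leq 2^{b+C}$ from Theorem~\ref{thm:CDW2}, and using $2^{2C}$ for the denominators of the produced probabilities, yields the stated parameters (after absorbing the one-off $+b$ into the final exponent via $k \geq 2$).

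The main obstacle is mostly bookkeeping rather than new mathematical content: every substantive step has already been proved, and the theorem is effectively a statement that the pipeline composes cleanly. The delicate point to verify is that the query-sufficiency hypothesis really does let the $O(k^2 m)$ queries from Lemma~\ref{lem:correctnessitem} be executed in $\poly(km)$ time given only the original \odp\ oracles, and that closure under each operation is applied in the correct order (scaling and disjoint union to produce each $v_\ell$, then item truncation). Once those are checked, the quality and complexity bounds fall out by plugging the parameter values from Lemma~\ref{lem:balance} and Corollary~\ref{cor:fulldeal} directly into Theorem~\ref{thm:CDW2}.
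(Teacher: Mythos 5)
Your proposal follows the paper's proof exactly: establish that closure properties keep $IT(v,w)$ inside $\mathcal{V}$, invoke Lemma~\ref{lem:balance} and Corollary~\ref{cor:fulldeal} to certify $d$-balancedness and $C$-compatibility, feed the instance to Theorem~\ref{thm:CDW2}, and recover an \odp\ solution via Lemma~\ref{lem:correctnessitem} plus query-sufficiency. One small slip in your bookkeeping: item truncation at $m$ lets the $m$ surviving items be spread across up to $m$ distinct copies $M_i$, so the correct bound is $v_\ell([(k-1)m]) \leq 2km\max\{v([m]),w([m])\}$ rather than $(2k-1)\max\{v([m]),w([m])\}$ (this is exactly the $2kmw([m])$ bound in the proof of Lemma~\ref{lem:balance}); the extra factor is harmlessly absorbed by the exponential in $k$, and the paper itself does not spell this out, so it does not affect the correctness of your argument.
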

\begin{proof}
To complete the proof, we just need to carefully track all the pieces we put together. The fact that $\mathcal{V}$ is closed under non-negative scaling, disjoint union, and item truncation guarantees that $IT(v,w)$ will only contain valuations in $\mathcal{V}$ (they will also take as input subsets of $[(k-1)m]$). By Lemma~\ref{lem:correctnessitem}, any $\alpha$-approximation to the \sadp\ instance $IT(v,w)$ yields an $\alpha$-approximation to the \odp\ instance $(v,w)$ (in poly-time, because $(v,w)$ is query-sufficient for $IT(v,w)$). So our goal is to solve the \sadp\ instance $IT(v,w)$, which is an instance of \sadp($\mathcal{V}$). 

Lemma~\ref{lem:balance} establishes that $IT(v,w)$ is $2mw([m])$-balanced, and Corollary~\ref{cor:fulldeal} establishes that $IT(v,w)$ is $k\log_2(2k\max\{v([m]),w([m])\}$-Compatible. Therefore, we can solve the \sadp\ instance $IT(v,w)$ using Theorem~\ref{thm:CDW2}, and the approximation/complexity guarantees match up.
\end{proof}

We can now apply Theorem~\ref{thm:mainreduction} to establish that \mdmdp($\mathcal{V}$) is hard for perturbable $\mathcal{V}$.

\begin{theorem}\label{thm:mainhardness} Let $\mathcal{V}$ be closed under non-negative scaling, disjoint union, and item truncation, and let $(v,w)$ be query-sufficient for $IT(v,w)$ whenever $v,w \in \mathcal{V}$. 
\begin{itemize}
\item For any integer $k \geq 2$, if $\mathcal{V}$ is $(x,y)$-perturbable, then no better than a $\frac{2my}{k-1+2my}$-approximation for \mdmdp($\mathcal{V}$) can be guaranteed with probability $q$ on instances with $(k-1)m$ items with fewer than $\frac{qx-1}{2} - \poly(km)$ queries the value oracle model.\footnote{Observe that because of the $-\poly(km)$, this result only has bite when $qx \gg \poly(km)$.}
\item For any $k = \poly(m)$, if $\mathcal{V}$ is efficiently $(x,y)$-perturbable for $x = \text{exp}(m)$, then no better than a $\frac{2my}{k-1+2my}$-approximation for \mdmdp($\mathcal{V}$) can be guaranteed with probability $1/\poly(m)$ on instances with $(k-1)m$ items in the explicit input model, unless NP $\subseteq$ RP.
\end{itemize}
\end{theorem}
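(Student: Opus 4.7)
The plan is to prove this by contrapositive: assuming an $\alpha$-approximation algorithm $A$ for \mdmdp($\mathcal{V}$) with $\alpha > \frac{2my}{k-1+2my}$, I feed a perturbable \odp\ instance through the reduction of Theorem~\ref{thm:mainreduction} and derive a contradiction to Lemma~\ref{lem:perturb}. Concretely, I first invoke Theorem~\ref{thm:mainreduction} with the given $k$ to convert $A$ into an algorithm $B$ for \odp($\mathcal{V}$); the hypotheses on $\mathcal{V}$ (closure under non-negative scaling, disjoint union, item truncation, plus query-sufficiency) are assumed in the theorem statement, so $B$ is well-defined and uses one call to $A$ plus $\poly(km)$ additional value queries, with the derived \mdmdp\ instance living on $(k-1)m$ items as required. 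On any \odp\ input $(v,w)$, $B$ achieves quality $\alpha - \frac{(1-\alpha)\,2mw([m])}{k-1}$ with the same success probability as $A$.

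Next, I exploit perturbability to produce a hard \odp\ input. By the definition of $(x,y)$-perturbable, there is a fixed $v \in \mathcal{V}$ with $v([m]) = y$ and $x$ distinct sets $S$ for which $v_S \in \mathcal{V}$. Feeding $B$ an adversarially-chosen instance $(v, v_S)$, we have $v_S([m]) \in \{y-1, y\}$, hence $w([m]) \leq y$, so $B$'s quality on this input is at least $\alpha - \frac{(1-\alpha)\,2my}{k-1}$. Rearranging the inequality $\alpha - \frac{(1-\alpha)\,2my}{k-1} > 0$ yields exactly $\alpha(k-1+2my) > 2my$, i.e., the threshold condition $\alpha > \frac{2my}{k-1+2my}$. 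Thus under our hypothesis on $\alpha$, the algorithm $B$ attains a strictly positive approximation ratio on every perturbable instance $(v, v_S)$.

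Finally, I apply Lemma~\ref{lem:perturb}. For the first bullet (value-oracle model), any positive-ratio algorithm for \odp($\mathcal{V}$) with success probability $q$ requires at least $\frac{qx-1}{2}$ total queries; since the reduction itself contributes only $\poly(km)$ queries, $A$ must make at least $\frac{qx-1}{2} - \poly(km)$ queries, proving the first bullet. For the second bullet (explicit-input model), with $\mathcal{V}$ efficiently $(x,y)$-perturbable for $x = \exp(m)$ and $k = \poly(m)$, the $\poly(km) = \poly(m)$ reduction overhead together with query-sufficiency ensure that $B$ is a randomized poly-time algorithm for \odp($\mathcal{V}$); a $1/\poly(m)$ success probability for $A$ therefore lifts to the same for $B$, and Lemma~\ref{lem:perturb} delivers NP $\subseteq$ RP, contradicting the assumption.

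The only real subtlety in the argument is matching the quality expression $\alpha - \frac{(1-\alpha)\,2mw([m])}{k-1}$ from Theorem~\ref{thm:mainreduction} to the stated threshold, which hinges on the observation $w([m]) \leq y$ for any perturbed pair $(v, v_S)$ and the algebraic inversion above. Everything else is bookkeeping on query counts and runtime through the two stages of the reduction, together with confirming that the efficiency conditions (query-sufficiency, $k = \poly(m)$, efficient perturbability) propagate polynomial runtime from $A$ to $B$ in the explicit-input case.
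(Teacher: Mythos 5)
Your proof is correct and follows essentially the same route as the paper's: invoke Theorem~\ref{thm:mainreduction} to convert a hypothesized \mdmdp($\mathcal{V}$) algorithm into an \odp($\mathcal{V}$) algorithm, instantiate on perturbable instances $(v,v_S)$, and derive a contradiction with Lemma~\ref{lem:perturb} in each input model. Your treatment is actually a bit more careful than the paper's terse proof, since you make explicit the two algebraic steps the paper elides --- that $w([m]) = v_S([m]) \leq y$ for any perturbed pair, and that the threshold $\alpha > \frac{2my}{k-1+2my}$ is exactly equivalent to $\alpha - \frac{(1-\alpha)2my}{k-1} > 0$.
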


Theorem~\ref{thm:mainhardness} implies that \mdmdp(matroid-based) is inapproximable within non-trivial factors.

\begin{theorem}\label{thm:matroidbased} For all $\varepsilon>0$, there does not exist an algorithm making $\poly(m)$ value queries in the value oracle model, $\poly(m)$ demand queries in the demand oracle model, or a poly-time algorithm in the explicit input model (unless NP $\subseteq$ RP) guaranteeing an $1/m^{1-\varepsilon}$-approximation with probability at least $1/\poly(m)$ for \mdmdp(matroid-based) on instances with $m$ items and $m^{1-2\varepsilon/9}$ values in the support of $D$.
\end{theorem}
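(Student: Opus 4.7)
The theorem is a corollary of Theorem~\ref{thm:mainhardness} applied to $\mathcal{V}$ = matroid-based, with the support $k$ and the base \odp\ instance size carefully balanced against $\varepsilon$. To keep notation clean, I'll use $n$ for the number of items in the base \odp\ instance, so the \mdmdp\ instance Theorem~\ref{thm:mainhardness} produces has $m = (k-1)n$ items and support $k$.

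\textbf{Step 1: verify the hypotheses for matroid-based.} Closure under non-negative scaling is immediate (rescale the weights $w_i$). Closure under disjoint union is the fact following the disjoint-union definition in Section~\ref{sec:reduction}. For closure under item truncation, observe that if $v(S) = \max_{T \subseteq S,\, T \in \mathcal{I}} \sum_{i \in T} w_i$, then the $y$-item truncation corresponds to replacing $\mathcal{I}$ with $\mathcal{I} \cap \mathcal{U}_y$, where $\mathcal{U}_y$ is the $y$-uniform matroid; this is again a matroid (the $y$-truncation of $\mathcal{I}$), so the truncated valuation is matroid-based. This is exactly Lemma~\ref{lem:matroid} in the $w_i \equiv 1$ case, and the same argument works with arbitrary weights. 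Query-sufficiency of $(v,w)$ for $IT(v,w)$ is Lemma~\ref{lem:valuequeries}. Finally, Binary OXS $\subseteq$ matroid-based, so Lemma~\ref{lem:boxs} gives that matroid-based is efficiently $(\binom{n}{n/2}, n/2)$-perturbable.

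\textbf{Step 2: parameter tuning.} With $y = n/2$ the bound from Theorem~\ref{thm:mainhardness} becomes $\frac{2n \cdot n/2}{k-1 + n^2} = \frac{n^2}{k-1 + n^2}$, ruled out on an \mdmdp\ instance of $m = (k-1)n$ items and support $k$. Setting $k := n^{3/\varepsilon - 1}$ yields $m \approx n^{3/\varepsilon}$, hence $n = m^{\varepsilon/3}$ and $k = m^{1-\varepsilon/3}$. A short calculation gives $\frac{n^2}{k - 1 + n^2} \leq m^{\varepsilon - 1} = 1/m^{1-\varepsilon}$, so $1/m^{1-\varepsilon}$-approximations are ruled out. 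The support constraint $k \leq m^{1-2\varepsilon/9}$ reduces to $1 - \varepsilon/3 \leq 1 - 2\varepsilon/9$, i.e.\ $\varepsilon/3 \geq 2\varepsilon/9$, which is immediate. The value-oracle and explicit-input conclusions then follow directly from the two bullets of Theorem~\ref{thm:mainhardness}, using the $(x,y) = (\binom{n}{n/2}, n/2)$ perturbability of matroid-based ($x$ is exp$(n) =$ exp$(m^{\varepsilon/3})$, so the $\poly(km)$ additive slack is negligible and the NP $\not\subseteq$ RP hypothesis unlocks the explicit input case).

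\textbf{Step 3: the demand oracle case.} Theorem~\ref{thm:mainhardness} is stated for value oracles and explicit input. For demand oracles, I use that matroid-based valuations support efficient two-way simulation between value and demand oracles (given an explicit matroid $\mathcal{I}$ and weights, the greedy algorithm implements a demand query in poly-time from a value oracle, and a value query can be extracted from a demand oracle via greedy on appropriately chosen price vectors, since the independent-set structure is recoverable). Therefore any $\poly(m)$-demand-query algorithm implies a $\poly(m)$-value-query algorithm, and the lower bound transfers.

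\textbf{Main obstacle.} All the heavy lifting has already been done upstream (the reduction of Theorem~\ref{thm:mainreduction}, the perturbability lemmas of Section~\ref{sec:odp}, and the compatibility/balance machinery of~\cite{CaiDW13b}); by the time we reach Theorem~\ref{thm:matroidbased} the remaining work is bookkeeping on exponents. The only delicate point is the two-sided constraint on $k$: it must be large enough to push $n^2/(k-1+n^2)$ below $1/m^{1-\varepsilon}$ while remaining small enough that $k \leq m^{1-2\varepsilon/9}$. Balancing these is precisely what pins down the choice $k = n^{3/\varepsilon - 1}$ and produces the $2\varepsilon/9$ exponent in the statement.
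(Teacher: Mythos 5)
Your proposal is correct and follows essentially the same route as the paper's proof: verify the hypotheses of Theorem~\ref{thm:mainhardness} for matroid-based (scaling, disjoint union, item truncation via Lemma~\ref{lem:matroid}, query-sufficiency via Lemma~\ref{lem:valuequeries}, perturbability via Lemma~\ref{lem:boxs} and Binary OXS $\subseteq$ matroid-based), then tune $k$, then transfer to the demand-oracle model by simulating demand queries with value queries for Gross Substitutes valuations. The only cosmetic difference is in the bookkeeping: you pick $k = n^{3/\varepsilon - 1}$ and read off the exponents directly, whereas the paper sets $k = n^{2/\varepsilon}$ and then rescales $\varepsilon' = 3\varepsilon/2$ at the end; both give $k \leq m^{1-2\varepsilon/9}$ and approximation bound $\leq m^{\varepsilon - 1}$, and yours is arguably a touch cleaner. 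One small note on the demand-oracle step: you invoke ``two-way simulation,'' but only the value-to-demand direction is actually needed (value queries to $(v,w)$ give value queries to $IT(v,w)$ by Lemma~\ref{lem:valuequeries}, which give demand queries because the valuations are Gross Substitutes, so a $\poly(m)$-demand-query algorithm can be run using $\poly(m)$ value queries and the value-oracle lower bound applies); the demand-to-value direction is irrelevant here and somewhat delicate to state precisely.
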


\bibliographystyle{ACM-Reference-Format}
\bibliography{MasterBib}
\appendix

\section{Omitted Proofs from Section~\ref{sec:odp}}\label{app:odp}

\begin{proof}[Proof of Lemma~\ref{lem:perturb}]
Let $v$ witness that $\mathcal{V}$ is $(x,y)$-perturbable, and consider being given the \odp\ input $(v, v_S)$, for some $S$ such that $v_S \in \mathcal{V}$ chosen uniformly at random. Then $S$ is the unique optimum, and the only set which guarantees an $\alpha$-approximation for any $\alpha > 0$. First consider any (randomized) value-oracle algorithm which makes at most $s$ queries. Using the principle of deferred decisions, first select the (random) prices to query \emph{conditioned on all previous responses being consistent with $v(\cdot)$}. Then as only one possible query ($S$) will be inconsistent with $v(\cdot)$, and $S$ is chosen uniformly at random, we have that except with probability $s/x$, all queries are consistent with $v(\cdot)$. When all queries are consistent with $v(\cdot)$, the algorithm must make some guess, which will be correct with probability at most $1/(x-s)$. So the total success probability is at most $s/x + 1/(x-s)$. Setting $s = \frac{qx-1}{2}$ has this always at most $q$.

Next, consider (randomized) demand-oracle algorithms which make at most $s$ queries. Observe first that the \emph{only} difference between a demand query for $v$ and a demand query for $v_S$ can come on a vector of prices $\vec{p}$ for which the demand query for $v$ on $\vec{p}$ outputs $S$ (this is because if any set $T \neq S$ is $v$'s favorite, it is certainly $v_S$'s favorite. But if $S$ is $v$'s favorite, it may not be $v_S$'s favorite). So again use the principle of deferred decisions, and select the (random) sets to query conditioned on all previous responses being consistent with $v(\cdot)$. Then for each query made, there is at most one $S$ for which $v_S$ might have output something different. Therefore, except with probability $s/x$, all queries are consistent with $v(\cdot)$. Again, the algorithm's total success probability is therefore at most $s/x + 1/(x-s)$. 

To see the final claim in the explicit input model, we develop a reduction from unique-SAT (3-SAT where the promise is that there is exactly one or zero satisfying assignments). Given an instance $X$ of 3-SAT, consider a circuit/Turing machine which takes as input an assignment and outputs yes if that assignment is satisfying (so the circuit would be poly-sized, and the Turing machine would run in poly-time). Define now a circuit/Turing machine for a valuation fuction $v^X(\cdot)$ which takes as input a set $S$ and outputs $v(S)$ if $S$ is not one of the $x$ perturbing sets. Otherwise, first map $S$ to $[x]$, and then take the corresponding assignment $\phi$. If $\phi$ satisfies $X$, output $v(S)-1$. Otherwise, output $v(S)$. Note that this entire procedure is computationally-efficient/can be done with a poly-sized circuit because $\mathcal{V}$ is efficiently perturbable.

Now, assume for contradiction that there is a randomized algorithm for \odp($\mathcal{V}$) which guarantees a non-zero approximation with probably at least $1/\poly(m)$. Given an instance $X$ of unique-SAT, run the algorithm on $v^X(\cdot)$, and let $S$ denote the output set. Observe that by the unique-SAT promise, $v^X(\cdot) \in \mathcal{V}$. Check if the assignment corresponding to $S$ satisfies $X$, and if so, output ``yes'' (otherwise, output ``no''). This algorithm will clearly never mistakenly guess that $X$ is satisfiable. To see that it correctly says yes with probability at least $1/\poly(m)$ (which, as usual, can be amplified to $2/3$ or $1-1/\poly(m)$ if desired), observe that whenever $X$ is satisfiable uniquely by $\phi$, that the set $S$ corresponding to $\phi$ is the \emph{only} set guaranteeing an $\alpha$-approximation for any $\alpha > 0$. Therefore, this set must be output with probability at least $1/\poly(m)$ by our algorithm, implying that we will output $\phi$ (and correctly guess yes) with probability at least $1/\poly(m)$. But no such algorithm for unique-SAT can exist unless NP $\subseteq$ RP. 
\end{proof}

\begin{proof}[Proof of Lemma~\ref{lem:boxs}]
Consider the complete bipartite graph with $m$ item nodes on the left and $m/2$ nodes on the right. Let $v(\cdot)$ denote the binary OXS valuation corresponding to this graph. Then $v(S):= \min\{|S|, m/2\}$, and $v(\cdot)$ is efficiently computable. Observe also that if we pick any set $S$ of size $m/2$ and remove all edges from each left-hand node in $S$ to (say) the top right-hand node (so we remove a total of $m/2$ edges), then we still have a binary OXS function. Moreover, we claim that this function is now $v_S$ for the removed $S$.

Indeed, observe that for any set $T$ of size $<m/2$, there is still a perfect matching from $T$ to the RHS, so we'll still have $v(T) = |T|$. For any set $T$ of size $\geq m/2$, perhaps $T = S$. If so, then clearly the max-weight matching has size $m/2-1$, as desired. Otherwise, $T$ contains an element not in $S$, which has an edge to the top node. There are at least $m/2-1$ left-hand nodes remaining, which have edges to every other right-hand node, and therefore form a matching of size $m/2-1$, for a total matching of $m/2$. 

Therefore, $v(\cdot)$ can be perturbed at any set of size $m/2$. We can index these sets lexicographically to get an efficient mapping from these sets to $[\binom{m}{m/2}]$, as desired.
\end{proof}
\section{Omitted Proofs from Section~\ref{sec:reduction}}\label{app:reduction}
\begin{proof}[Proof of Lemma~\ref{lem:matroid}]
To see the claim, consider the independent sets $\mathcal{I}$ for the matroid defining $v(\cdot)$. It is well-known that updating $\mathcal{I}_y:= \{S \in \mathcal{I}, |S| \leq y\}$ results in $\mathcal{I}_y$ being a matroid.\footnote{This is also easy to check: Clearly $\mathcal{I}_y$ is downwards closed. For any two independent sets $S, T$ in $\mathcal{I}_y$, with $|S| < |T|$, the fact that $\mathcal{I}$ satisfies the augmentation property immediately implies that there exists $i \in T$ such that $S \cup \{i\} \in \mathcal{I}$. As $|T| \leq y$, $S \cup \{i\} \in \mathcal{I}_y$ as well, so $\mathcal{I}_y$ has the augmentation property too.} It is now easy to see that $v(\cdot)$ item-truncated at $y$ is exactly the matroid-based valuation corresponding to $\mathcal{I}_y$ (with the exact same weights $w_e$ for all items $e$), so it is matroid-based as well.
\end{proof}

\begin{proof}[Proof of Lemma~\ref{lem:correctnessitem}]
Throughout the proof, we'll let $M$ denote one copy of the original $m$ items, $M_i$ denote the $i^{th}$ copy, and use variables $S, T,\ldots$ to denote subsets of $\sqcup_i M_i$, and variables $A,B,\ldots$ to denote subsets of $M$.

Let $S = S_1 \sqcup \ldots \sqcup S_{k-1}$ denote the $\alpha$ approximation to the \sadp\ instance $IT(v,w)$. First, we wish to refine $S$ so that $|S| \leq m$, but it is still an $\alpha$-approximation. To do this, observe that if $|S| \leq m$ then we are done. Otherwise, for each $\ell$ there is a subset $T\subseteq S$ with $|T|=m$ and $v_\ell(S) = v_\ell(T)$. To find this $T$, go through each item $i \in S$ one-by-one and check whether $v_\ell(S\setminus\{i\}) =v_\ell(S)$. If so, remove $i$ and continue (because there is still a set $T$ of size $m$ inside $S \setminus \{i\}$ with $v_\ell(T) = v_\ell(S)$). If not, we know that every such $T$ must contain $i$, so keep it. Stop when $|S| = m$. Observe that this procedure must terminate, because any $i \notin T$ will be removed when processed (and that it can be implemented with only value queries to $v_\ell(\cdot)$).

Do the above process for each $\ell$, to get $k$ candidate sets $S^\ell$, each of size at most $m$. Next, simply output $\arg\max_{\ell} \{v(S^\ell_\ell) - w(S^\ell_\ell)\}$ as the solution to the \odp\ instance $(v,w)$. We claim this must be an $\alpha$-approximation.

Consider first the solution $S$, and the index $\ell$ for which $v_\ell(S) - v_{\ell+1}(S) \geq \alpha \cdot \max_{T} \{v_\ell(T) - v_{\ell+1}(T)\}$. First, obesrve that indeed $\max_T \{v_\ell(T) - v_{\ell+1}(T)\} \geq (k+\ell)\max_A \{v(A) - w(A)\}$. This is because one candidate for $T$ is to set $S_\ell = \arg\max_A \{v(A) - w(A)\}$, and $S_i = \emptyset$ for all $i \neq \ell$ (which results in exactly this difference). So we may now conclude that:

$$v_\ell(S) - v_{\ell+1}(S) \geq \alpha(k+\ell) \cdot\max_A \{v(A) - w(A)\}.$$

Next, we claim that we must have $v_\ell(S^\ell) - v_{\ell+1}(S^\ell) \geq v_\ell(S) - v_{\ell+1}(S)$. This is simply because $S^\ell \subseteq S$, but $v_\ell(S^\ell) = v_\ell(S)$ by definition. Therefore, we get:

$$v_\ell(S^\ell) - v_{\ell+1}(S^\ell) \geq \alpha  (k+\ell)\cdot \max_A \{v(A) - w(A)\}.$$

Finally, observe that $|S^\ell| \leq m$, so we know that:
\begin{align*}
v_\ell(S^\ell) - v_{\ell+1}(S^\ell) &=\sum_{i \geq \ell} (k+i) v(S^\ell_i) + \sum_{i < \ell} (k+i) w(S^\ell_i) - \sum_{i > \ell} (k+i)v(S^\ell_i) - \sum_{i \leq \ell} (k+i)w(S^\ell_i)\\
&= (k+\ell)\cdot(v(S^\ell_\ell) - w(S^\ell_\ell))\\
\end{align*}

Therefore, we may conclude further that:

$$v(S^\ell_\ell) - w(S^\ell_\ell) \geq\alpha \cdot  \max_A \{v(A) - w(A)\}.$$

The output of our algorithm is at least as good as $S^\ell_\ell$, because we take the maximum of this over all potential $\ell$, so we get an $\alpha$-approximation.
\end{proof}

\begin{proof}[Proof of Lemma~\ref{lem:balance}]
No matter how large $k$ is, any set with $m$ items can get value from at most $m$ copies of $M$. Within each copy, the maximum possible is $2kw([m])$, so the total value for any set is at msot $2kmw([m])$. Also, for all $\ell$, $\max_S \{v_\ell(S) - v_{\ell+1}(S)\} \geq (k+\ell)$. So the instance is $2mw([m])$-balanced.
\end{proof}

\begin{proof}[Proof of Lemma~\ref{lem:valuequeries}]
The challenge is that, when $|S| \gg m$, it's not clear how to find the best $T\subseteq S$ of size $m$. Indeed, it is not necessarily possible in poly-time when $v,w$ are submodular, and therefore some assumption on valuation classes is necessary. 

But when $v,w$ are matroid-based, recall that each $v_\ell(\cdot)$ is matroid-based as well. This suggests that a greedy algorithm can find the best $T \subseteq S$ of size $m$. Indeed, the following algorithm works: 
\begin{enumerate}
\item For each $e \in S$, find $v_\ell(e)$. This can be done with a single value query to either $v(\cdot)$ or $w(\cdot)$.
\item Sort the elements in $S$ in decreasing order of $v_\ell(e)$. 
\item Initialize $T = \emptyset$. 
\item Process elements $e \in S$ one at a time. 
\begin{enumerate}
\item If $|T| = m$, stop. 
\item Else, compute $v_\ell(T \cup \{e\}) - v_\ell(T)$. This can be done with two queries to either $v(\cdot)$ or $w(\cdot)$ because $|T| < m$.
\item If $v_\ell(T \cup \{e\}) - v_\ell(T) > 0$, add $e$ to $T$. Otherwise, don't. 
\end{enumerate}
\item Output $\sum_{i \geq \ell} (k+i)\cdot v(T_i) + \sum_{i < \ell} (k+i) \cdot w(T_i)$.
\end{enumerate}

The algorithm above finds the max-weight independent subset of $S$, for the matroid defining the matroid-based valuation $v_\ell(\cdot)$, and the correct weights. Therefore, it correctly computes $v_\ell(S)$. 
\end{proof}

\begin{proof}[Proof of Lemma~\ref{lem:compatible}]
Consider the given allocations $(X_1,\ldots,X_k)$ and the multipliers $Q_i:= (C_1+1)^{(i-1)}$. Observe first that all multipliers are integers at most $2^{k \log_2 (C_1)}$, as desired. Also, we are already given that $X_\ell \in \arg\max\{v_\ell(S) - v_{\ell+1}(S)\}$ for all $\ell$. So we just need to establish that $V = (Q_1\cdot v_1,\ldots, Q_k \cdot v_k)$ is compatible with $X = (X_1,\ldots, X_k)$.

First, let's establish that $V$ and $X$ are cyclic monotone. We proceed inductively from $\ell = k$ down to $1$ and prove that $v_\ell$ must be matched to $X_\ell$. Assume for inductive hypothesis that the max-weight matching must match $v_i$ to $X_i$ for all $i > \ell$ and consider now $\ell$. Then consider the weight contributed by the edge between $v_\ell$ and $X_\ell$, versus $v_\ell$ and any $X_i$, $i < \ell$. By property 3, the former edge contributes at least $(C_1+1)^{\ell-1}$ more than any other potential edge. At the same time, by property 2, the maximum possible contribution from \emph{all} edges $<\ell$ is at most $\sum_{i = 1}^{\ell-1} (C_1+1)^{i-1} C_1 \leq C_1 (C_1+1)^{\ell-1}/C_1$. Therefore, the maximum possible weight we can get from edges not adjacent to $v_\ell$ is less than what we lose by not matching $v_\ell$ to $X_\ell$, and the max-weight matching must match $v_\ell$ to $X_\ell$.

A nearly-identical argument establishes that the max-weight matching of valuations $v_{i+1},\ldots, v_j$ to allocations $X_i,\ldots, X_{j-1}$ is to match $X_\ell$ to $v_{\ell+1}$ for all $\ell$. Again proceed inductively from $\ell = j$ down to $i+1$, and assume for inductive hypothesis that the max-weight matching must match $v_a$ to $X_{a-1}$ for all $a > \ell$. Consider again the weight contributed by the edge between $v_\ell$ and $X_{\ell-1}$ versus $v_\ell$ and any $X_i$, $i < \ell-1$. By property 3, the former edge contributes at least $(C_1+1)^{\ell-1}$ more than any other potential edge. At the same time, by property 2, the maximum possible contribution from \emph{all} edges $<\ell$ is at most $\sum_{i = 1}^{\ell-1} (C_1+1)^{i-1} C_1 \leq C_1 (C_1+1)^{\ell-1}/C_1$. Therefore, the maximum possible weight we can get from edges not adjacent to $v_\ell$ is less than what we lose by not matching $v_\ell$ to $X_{\ell-1}$, and the max-weight matching must match $v_\ell$ to $X_{\ell-1}$. 

This completes the proof that $(v_1,\ldots, v_k)$ is $k\log_2(C_1)$-Compatible.
\end{proof}

\begin{proof}[Proof of Corollary~\ref{cor:fulldeal}]
Consider the allocations $X_\ell$ defined so that $X_\ell \cap M_\ell \in \arg\max_A\{v(A) - w(A)\}$, and $X_\ell \cap M_i = \emptyset$ for all $i \neq \ell$. Then we clearly satisfy hypothesis 1 in Lemma~\ref{lem:compatible}. To see this, recall that for any set $S$, $v_\ell(S) - v_{\ell+1}(S) \leq v_\ell(S \cap M_\ell) - v_{\ell+1}(S \cap M_\ell)$, and $X_\ell$ is the optimal set contained in $M_\ell$.

Hypothesis 2 is also clearly satisfied for $C_1 = 2k\max\{v([m]),w([m])\}$. This follows because $v_\ell(M_i) \leq 2k\max\{v([m]),w([m])\}$ for all $\ell$.

Hypothesis 3 is satisfied for the following reasons. First, let $A^*:= \arg\max_A \{v(A)-w(A)\}$. We know that $v(A^*) - w(A^*) \geq 0$. We also know that $v_\ell(X_\ell) = (k+\ell)\cdot v(A^*)$, and $v_\ell(X_i) = (k+i)\cdot w(A^*)$ for all $i < \ell$. As $v(A^*) \geq w(A^*)$, we Hypothesis 3 holds as well.

By Lemma~\ref{lem:compatible}, we may conclude that $IT(v,v_S)$ is $k\log_2(2k\max\{v([m]),w([m])\})$-Compatible.
\end{proof}

\begin{proof}[Proof of Theorem~\ref{thm:mainhardness}]
To see the first claim, assume for contradiction that such an algorithm existed. Then with Theorem~\ref{thm:mainreduction}, we get an algorithm for \odp($\mathcal{V}$) which achieves a non-trivial approximation on all instances $(v,v_S)$ with probability $q$ with a total of $\frac{qx-1}{2} - \poly(km)+ \poly(km)$ value queries (the extra queries come from Lemma~\ref{lem:correctnessitem}). By Lemma~\ref{lem:perturb}, this is not possible. So the original \mdmdp($\mathcal{V}$) algorithm must not exist.

To see the second claim, assume again for contradiction that such an algorithm existed. Then by Theorem~\ref{thm:mainreduction}, this would imply a poly-time algorithm with a non-trivial approximation on all instances $(v,v_S)$ with probability $1/\poly(m)$ with total runtime $\poly(km) = \poly(m)$. By Lemma~\ref{lem:perturb}, this is impossible unless NP $\subseteq$ RP.
\end{proof}

\begin{proof}[Proof of Theorem~\ref{thm:matroidbased}]
This is a corollary of Theorem~\ref{thm:mainhardness} and Lemma~\ref{lem:boxs}. As Binary OXS is $(\text{exp}(m),m/2)$-perturbable, so is matroid-based. Therefore, Theorem~\ref{thm:mainhardness} implies that \mdmdp(matroid-based) cannot be approximated better than $\frac{m^2}{k-1+m^2}$ with probability $1/\poly(m)$ on instances with $(k-1)m$ items and distributions of support $k$ for any $k = \poly(m)$. Letting $k = m^{2/\varepsilon}$, let $m' = km$ denote the number of items. Then there are $k \leq (m^{2/\varepsilon+1})^{1-\varepsilon/3} = (m')^{1-\varepsilon/3}$ valuations in the support, and the best achievable approximation guarantee is $m^{2-2/\varepsilon} \leq (m^{2/\varepsilon+1})^{-1+3\varepsilon/2}$. 

So setting $\varepsilon' = 3\varepsilon/2$, we see that with $m'$ items and $(m')^{1-2\varepsilon'/9}$ valuations in the support of $D$, we can't get better than a $(m')^{1-\varepsilon'}$ approximation.

The only missing step is to connect everything to the demand oracle model. Observe that because all valuations involved are matroid-based (and therefore Gross Substitutes), that in fact a demand oracle can be implemented with $\poly(m)$ value queries. So if we could solve \mdmdp(matroid-based) using demand queries, we could use a value oracle for the $(v,w)$ input to \odp\ to get a value oracle for the \mdmdp\ instance as in the reduction, which is also good enough for demand queries.
\end{proof}

\section{Claims for Value Truncation}\label{app:value}
Below, define the \sadp\ instance $VT(v,w)$ to first use the Disjoint Union reduction, followed by the value-truncation reduction with parameter $2kv([m])$. 

\begin{lemma}\label{lem:correctnessvalues} Let $S$ be an $\alpha$-approximation to the \sadp\ instance $VT(v,w)$. Then with an additional $O(k)$ runtime/value queries (to the value-truncated values), an $\alpha$-approximation to the \odp\ instance $(v,w)$ can be found.
\end{lemma}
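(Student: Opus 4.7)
The plan is to exploit the fact that value truncation $f(\cdot) \mapsto \min(f(\cdot), C)$ is a monotone, $1$-Lipschitz operation, so it can only compress the difference $v_\ell(S) - v_{\ell+1}(S)$ and never amplify it beyond the untruncated value. Since the Scaled Disjoint Union construction already makes the untruncated difference equal to $(k+\ell)(v(S_\ell) - w(S_\ell))$, this will directly imply that the $\ell$-th copy $S_\ell$ of $S$ is itself an $\alpha$-approximation to the original \odp\ instance $(v,w)$. Notably, no refinement of $S$ analogous to the one in Lemma~\ref{lem:correctnessitem} is needed, which is why the query overhead here is only $O(k)$ rather than $O(k^2m)$.

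I would set $D := \max_A\{v(A) - w(A)\} > 0$, write $S = S_1 \sqcup \cdots \sqcup S_{k-1}$ with $S_i \subseteq M_i$, and pick the index $\ell$ witnessing $S$'s $\alpha$-approximation guarantee for $VT(v,w)$. A lower bound on the \sadp\ optimum comes from the single-copy allocation $T_\ell = \arg\max_A\{v(A)-w(A)\}$, $T_i = \emptyset$ for $i \neq \ell$: since $v_\ell(T), v_{\ell+1}(T) \leq (k+\ell)v([m]) < 2kv([m])$, the truncation is inactive, so $\max_T\{v'_\ell(T) - v'_{\ell+1}(T)\} \geq (k+\ell)D$. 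For the matching upper bound (which is the main step), I would use the telescoping identity $v_\ell(T) - v_{\ell+1}(T) = (k+\ell)(v(T_\ell) - w(T_\ell))$ (valid for every $T$, exactly as in the proof of Lemma~\ref{lem:correctnessitem}), and then a short case analysis in the four regimes determined by whether $v_\ell(T)$ and $v_{\ell+1}(T)$ exceed $C := 2kv([m])$ to conclude that $\min(v_\ell(T),C) - \min(v_{\ell+1}(T),C) \leq \max\{v_\ell(T) - v_{\ell+1}(T),\,0\}$. Plugging in $T = S$ at the witnessing $\ell$ then gives $\alpha(k+\ell)D \leq v'_\ell(S) - v'_{\ell+1}(S) \leq (k+\ell)(v(S_\ell) - w(S_\ell))$, hence $v(S_\ell) - w(S_\ell) \geq \alpha D$.

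Since the witnessing $\ell$ is not known to the algorithm, the output is $\arg\max_{i \in [k-1]}\{v(S_i) - w(S_i)\}$. Each quantity $v(S_i)$ or $w(S_i)$ can be recovered from a single value query to the corresponding truncated function $v'_i$ or $v'_{i+1}$ evaluated on $S_i \subseteq M_i$: the truncation is inactive on subsets of a single copy (assuming, or pre-processing so that, $v([m]) \geq w([m])$; otherwise the truncation threshold can be taken as $2k\max\{v([m]),w([m])\}$), so dividing out the scaling factor $(k+i)$ recovers the untruncated value. This totals $O(k)$ queries. The main point demanding care is the $1$-Lipschitz case analysis above, specifically the asymmetric regime in which truncation clips $v_\ell(S)$ but not $v_{\ell+1}(S)$; the remaining bookkeeping closely mirrors the proof of Lemma~\ref{lem:correctnessitem}.
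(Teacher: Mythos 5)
Your proof is correct and follows essentially the same route as the paper's: same output rule $\arg\max_\ell\{v(S_\ell)-w(S_\ell)\}$, same single-copy lower bound on the \sadp\ optimum, and the paper's two-case analysis on whether $v_{\ell+1}(S)$ hits the cap $2kv([m])$ is exactly your $\min(a,C)-\min(b,C)\leq\max\{a-b,0\}$ observation restated (and you correctly identify that, unlike item truncation, no $S^\ell$ refinement step is needed, hence the $O(k)$ rather than $O(k^2m)$ query count). Your extra care about whether querying the truncated $v'_\ell$ on a single copy $S_i$ can clip --- and the $v([m])\geq w([m])$ / $2k\max\{v([m]),w([m])\}$ fix --- is slightly more thorough than the paper, which does not elaborate on this point.
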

\begin{proof}
Throughout the proof, we'll let $M$ denote one copy of the original $m$ items, $M_i$ denote the $i^{th}$ copy, and use variables $S, T,\ldots$ to denote subsets of $\sqcup_i M_i$, and variables $A,B,\ldots$ to denote subsets of $M$.

Let $S = S_1 \sqcup\ldots \sqcup S_{k-1}$ denote the $\alpha$ approximation to the \sadp\ instance $VT(v,w)$ (the resulting \sadp\ instance from the value-truncation reduction). Simply output $\arg\max_{\ell} \{v(S_\ell) - w(S_\ell\}$ (or $\emptyset$ if they all yield a negative answer) as the solution to the \odp\ instance $(v,w)$. We claim this must be an $\alpha$-approximation.

Consider first the solution $S$, and the index $\ell$ for which $v_\ell(S) - v_{\ell+1}(S) \geq \alpha \cdot \max_{T} \{v_\ell(T) - v_{\ell+1}(T)\}$. First, observe that indeed $\max_T \{v_\ell(T) - v_{\ell+1}(T)\} \geq (k+\ell)\max_A \{v(A) - w(A)\}$. This is because one candidate for $T$ is to set $S_\ell = \arg\max_A \{v(A) - w(A)\}$, and $S_i = \emptyset$ for all $i \neq \ell$ (which results in at least this difference, because our truncation exceeds $2kv([m])$). So we may now conclude that:

$$v_\ell(S) - v_{\ell+1}(S) \geq \alpha(k+\ell) \cdot\max_A \{v(A) - w(A)\}.$$

Next, there are a few cases to consider. First, perhaps $v_{\ell+1}(S) = 2kv([m])$. In this case, $v_\ell(S) \leq 2kv([m])$, so the LHS is $0$, which guarantees an $\alpha$-approximation. Therefore, $\emptyset$ also guarantees an $\alpha$-approximation, so our algorithm succeeds. Next, perhaps $v_{\ell+1}(S) < 2kv([m])$. In this case, we get that:

\begin{align*}
v_\ell(S) - v_{\ell+1}(S) &= v_\ell(S) - \sum_{i > \ell} (k+i)v(S_i) - \sum_{i \leq \ell} (k+i)w(S_i)\\
& \leq \sum_{i \geq \ell} (k+i) v(S_i) + \sum_{i < \ell} (k+i)w(S_i) - \sum_{i > \ell} (k+i)v(S_i) - \sum_{i \leq \ell} (k+i) w(S_i)\\
&=(k+\ell)\cdot(v(S_\ell) - w(S_\ell))\\
\end{align*}

Therefore, we may conclude further that:

$$v(S_\ell) - w(S_\ell) \geq\alpha \cdot  \max_A \{v(A) - w(A)\}.$$

The output of our algorithm is at least as good as $S_\ell$, because we take the maximum of this over all potential $\ell$, so we get an $\alpha$-approximation.
\end{proof}

\begin{lemma}\label{lem:balancevalue} For any \odp\ instance $(v,w)$, the instance $VT(v,w)$ is $2v([m])$-balanced.
\end{lemma}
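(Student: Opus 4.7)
The plan is to verify the $d$-balanced inequality directly with $d = 2v([m])$. The upper bound on the left-hand side is immediate: in $VT(v,w)$ every valuation is value-truncated at $2kv([m])$, so for the entire ground set $M_1 \sqcup \ldots \sqcup M_{k-1}$ we automatically have $v'_k(M_1 \sqcup \ldots \sqcup M_{k-1}) \leq 2kv([m])$. Hence the whole task reduces to showing that $v'_\ell(X) - v'_{\ell+1}(X) \geq k$ for every $\ell \in [k-1]$, where $X = \arg\max_T\{v'_\ell(T) - v'_{\ell+1}(T)\}$; the ratio $2kv([m])/k = 2v([m])$ then gives exactly the claimed balancedness.

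For the lower bound I would exhibit an explicit candidate rather than analyze $X$ directly. Let $A^* \in \arg\max_{A\subseteq [m]}\{v(A) - w(A)\}$ in the base \odp\ instance, and take $T$ with $T\cap M_\ell = A^*$ and $T\cap M_i = \emptyset$ for all $i \neq \ell$. The disjoint-union step of the reduction gives (pre-truncation) $v_\ell(T) = (k+\ell)v(A^*)$ and $v_{\ell+1}(T) = (k+\ell)w(A^*)$, so
\[
v_\ell(T) - v_{\ell+1}(T) \;=\; (k+\ell)\bigl(v(A^*) - w(A^*)\bigr) \;\geq\; k,
\]
using $k+\ell \geq k$ together with the Section~\ref{sec:problems} w.l.o.g.\ assumption that some $S$ has $v(S)>w(S)$ and the integrality convention, which together force $v(A^*)-w(A^*) \geq 1$.

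The one subtle step is verifying that value-truncation at $2kv([m])$ does not erase this difference. Since $\ell \leq k-1$ and $A^*\subseteq [m]$, monotonicity gives $v_\ell(T) = (k+\ell)v(A^*) \leq (2k-1)v([m]) < 2kv([m])$, so the truncation is inactive on the $v_\ell$ side and $v'_\ell(T) = v_\ell(T)$. Meanwhile $v'_{\ell+1}(T) \leq v_{\ell+1}(T)$ regardless of whether truncation fires on $v_{\ell+1}$. Consequently $v'_\ell(T) - v'_{\ell+1}(T) \geq v_\ell(T) - v_{\ell+1}(T) \geq k$, and since $X$ is a true maximizer this inequality holds at $X$ as well.

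The only mildly delicate point is this last check; it is the reason the truncation threshold was chosen to be $2kv([m])$ rather than something tighter like $kv([m])$, so that the ``witness'' allocation $T$ for each $\ell \leq k-1$ still lies strictly below the cap and the full disjoint-union gap is preserved.
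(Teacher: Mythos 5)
Your proof is correct and follows essentially the same route as the paper's: an upper bound of $2kv([m])$ on $v'_k$ coming directly from the truncation cap, and a lower bound of (at least) $k$ on each consecutive gap $v'_\ell(X)-v'_{\ell+1}(X)$ obtained by exhibiting the witness set $T$ supported on $M_\ell$ with $T\cap M_\ell=A^*$. The paper's version is terser and simply asserts $\max_S\{v_\ell(S)-v_{\ell+1}(S)\}\geq k+\ell$; you correctly flag and carefully verify the one point the paper leaves implicit here (though it is noted in passing in the proof of Lemma~\ref{lem:correctnessvalues}), namely that $(k+\ell)v(A^*)\leq(2k-1)v([m])<2kv([m])$ so the truncation cap does not bite on the $v'_\ell$ side of the witness.
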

\begin{proof}
Value-truncation explicitly caps $v_k(M_1 \sqcup \ldots \sqcup M_{k-1})$ at $2kv([m])$. Also, for all $\ell$, $\max_S \{v_\ell(S) - v_{\ell+1}(S)\} \geq (k+\ell)$. So the instance is $2v([m])$-balanced.
\end{proof}

\begin{lemma}\label{lem:valuequeriesvalue} All $v,w$ are query-sufficient for $VT(v,w)$.
\end{lemma}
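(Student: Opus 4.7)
The plan is to exploit the structural simplicity of value truncation relative to item truncation: computing $v'_\ell(S) := \min\{v_\ell(S),\, 2kv([m])\}$ requires only the scalar value $v_\ell(S)$ and the truncation threshold, with no combinatorial optimization over subsets of $S$. This is precisely why, in contrast to Lemma~\ref{lem:valuequeries} (whose proof needed the matroid structure to run a greedy subset-selection routine), no assumption on the valuation class should be needed here.

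Concretely, given a query set $S \subseteq M_1 \sqcup \ldots \sqcup M_{k-1}$, I would first partition it as $S = S_1 \sqcup \ldots \sqcup S_{k-1}$ by intersecting with each copy $M_i$, which is a trivial set operation. Then I would issue one value query to $v(\cdot)$ on each $S_i$ with $i \geq \ell$ and one value query to $w(\cdot)$ on each $S_i$ with $i < \ell$, and combine the answers with the prescribed weights $(k+i)$ from the Scaled Disjoint Union step to obtain $v_\ell(S)$. One additional value query to $v(\cdot)$ on $[m]$ yields the threshold $2kv([m])$ (which may be precomputed and cached across all queries), and taking the minimum gives $v'_\ell(S)$. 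The entire simulation uses $O(k)$ value queries to $v,w$ plus elementary arithmetic and a single comparison, which establishes part~(a) of the query-sufficiency definition.

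For part~(b), the same recipe assembles into a succinct circuit (or Turing machine): given succinct circuits for $v(\cdot)$ and $w(\cdot)$, one wires $O(k)$ copies of them in parallel, sends their outputs through arithmetic gates implementing the weighted sum with the $(k+i)$ coefficients, and applies a single comparator for the $\min$ against the precomputed threshold. Since the weights and the threshold $2kv([m])$ have polynomial bit length, the combined circuit remains succinct. There is essentially no technical obstacle to overcome; the entire content of the lemma is that value truncation, because it never triggers a subset-optimization problem, is compatible with arbitrary valuation classes, and the proof simply spells out the composition above.
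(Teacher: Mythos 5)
Your proposal is correct and matches the paper's proof, which makes the same observation that a value query to $v'_\ell$ reduces to $O(k)$ value queries to $v$ and $w$ followed by a weighted sum and a single $\min$; you simply spell out the circuit-composition side of query-sufficiency in more detail than the paper's one-sentence argument.
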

\begin{proof}
Simply observe that computing a value query for a function $v_\ell(S_1\sqcup\ldots \sqcup S_{k-1})$ simply requires computing either $v(S_i)$ or $w(S_i)$ for all $i$, summing them, and taking a minimum with $x$, all of which can be done in time $O(k)$.
\end{proof}

\section{Example: Submodular is not closed under Item Truncation}
Here we provide an example of a submodular function $v$ and size parameter $y$ such that $v':= \max_{T \subseteq S, |T| \leq y} \{v(T)\}$. This is sufficient to prove that submodular functions are not closed under item truncation, as our transformation of $v$ is a special instance of item truncation with $y = 2$.

\begin{table}[H]
\centering
\caption{Valuation function $v$}
\begin{tabular}{ccccccccl}
Sets of size 1 & \multicolumn{1}{c|}{Value} & Sets of size 2 & \multicolumn{1}{c|}{Value} & Sets of size 3 & \multicolumn{1}{c|}{Value} & Sets of size 4 & Value &  \\ \cline{1-8}
\{a\}          & \multicolumn{1}{c|}{5}     & \{c,d\}        & \multicolumn{1}{c|}{10}    & \{a,b,c\}      & \multicolumn{1}{c|}{9}     & \{a,b,c,d\}    & 10    &  \\
\{b\}          & \multicolumn{1}{c|}{5}     & \{a,d\}        & \multicolumn{1}{c|}{9}     & \{a,b,d\}      & \multicolumn{1}{c|}{10}    &                &       &  \\
\{c\}          & \multicolumn{1}{c|}{5}     & \{b,d\}        & \multicolumn{1}{c|}{9}     & \{a,c,d\}      & \multicolumn{1}{c|}{10}    &                &       &  \\
\{d\}          & \multicolumn{1}{c|}{5}     & \{a,c\}        & \multicolumn{1}{c|}{9}     & \{b,c,d\}      & \multicolumn{1}{c|}{10}    &                &       &  \\
               & \multicolumn{1}{c|}{}      & \{b,c\}        & \multicolumn{1}{c|}{9}     &                & \multicolumn{1}{c|}{}      &                &       &  \\
               & \multicolumn{1}{c|}{}      & \{a,b\}        & \multicolumn{1}{c|}{9}     &                & \multicolumn{1}{c|}{}      &                &       &  \\
               &                            &                &                            &                &                            &                &       & 
\end{tabular}
\end{table}

We will use the following definition for submodularity: for any set of items $X$ and any items $y,z \notin X$, $v(X\cup y \cup z) - v(X \cup y) \leq v(X \cup z) - v(X)$. We will prove this by considering all possible sizes for $X$. If $X = \emptyset$: for all items $y$ and $z$, $v(y \cup z) - v(y) \leq 5 = v(z) - v(\emptyset)$. If $|X| = 1$: for all items $y$ and $z$, $v(X \cup y \cup z) - v(X \cup y) \leq 1 < 4 \leq v(X \cup z) - v(X)$. Now consider the case where $|X| = 2$. If $z \ne d$, $X \cup y$ cannot be the set $\{a,b,c\}$. In this case, $v(X \cup y) = v(X \cup z) = 10$, and therefore $v(X \cup y \cup z) - v(X \cup y) = 0 \leq v(X \cup z) - v(X)$. If $z = d$, then $X \ne \{c,d\}$, and therefore $v(X) = 9$. $X \cup y$ must be equal to $\{a,b,c\}$ and $X \cup z \ne \{a,b,c\}$, so $v(X \cup y) = 9$ and $v(X \cup z) = 10$. Putting it all together, $v(X\cup y \cup z) - v(X \cup y) = 10 - 9 = 1 = v(X \cup z) - v(X)$.
\\
\\
The function is also monotone. Note that the only sets $(X, Y)$ where $|X| < |Y|$ but $v(X) > v(Y)$ are $(\{c,d\},\{a,b,c\})$. In this case $X \nsubseteq Y$, so it does not violate monotonicity. Thus, $v$ represents a monotone submodular function (so there's nothing "weird" about our counterexample). 
\\
\\
Now let us define a function $v'$ as $\max_{T \subseteq S, |T| \leq 2} \{v(T)\}$. Note that $v$ and $v'$ will differ in evaluation only over $\{a,b,d\}$, where $v(a,b,d) = 10$ and $v'(a,b,d) = 9$. All other subsets of size greater than $2$ contain the subset $\{c,d\}$ with value $10$. 
\\
\\
$v'$ is not submodular. Note that $$v'(a,b,c,d) - v'(a,b,d) = 1 > 0 = v'(a,b,c) - v'(a,b)$$
If we let $X = \{a,b\}$, $x = c$ and $y = d$, we see that
$v'(X \cup x \cup y) - v'(X \cup y) > v'(X \cup x) - v'(X)$,
thus contradicting the submodularity definition. 

\section{Valuation Classes}\label{app:valuations}
The following valuation classes are referenced throughout the text. They are listed below in strictly increasing order of generality.
\begin{itemize}
\item Additive: for all $S$, $v(S):=\sum_{i \in S} v(\{i\})$. Unit-demand: for all $S$, $v(S) := \max_{i \in S} v(\{i\})$.
\item $c$-demand: for all $S$, $v(S) := \max_{T \subseteq S, |T| \leq c} \{\sum_{i \in T} v(\{i\})$.
\item OXS: there is a weighted bipartite graph $G$ with nodes $L= [m]$ on the left and $R$ on the right. $v(S)$ is the size of the max weight matching using nodes $S$ on the left and $R$ on the right. When all weights are $1$ or $0$, call this binary OXS.
\item Matroid-based: let $\mathcal{I}$ be independent sets of a matroid on $[m]$, and $w_i$ be weights for each $i\in [m]$. Then $v(S):=\max_{T \subseteq S, T \in \mathcal{I}}\{\sum_{i \in T} w_i\}$. When each $w_i = 1$, call this matroid-rank.
\item Gross Substitutes: Let $\vec{q},\vec{p}$ be price vectors with $q_i \geq p_i$ for all $i$. Let $P:=\{i, q_i = p_i\}$. For all $X \in \arg\max_S\{v(S) - \sum_{i \in S} p_i\}$, there exists $Y \in \arg\max_S\{v(S) - \sum_{i \in S} q_i\}$ with $(P \cap X) \subseteq Y$.
\item Submodular: for all $S, T$, $v(S) + v(T) \geq v(S \cap T)+v(S \cup T)$.
\item XOS:\footnote{Equivalently, Fractionally Subadditive.} There exist additive functions $w_1,\ldots$ such that for all $S$, $v(S):= \max_j \{w_j(S)\}$.
\item Subadditive: for all $S, T$, $v(S) + v(T) \geq v(S \cup T)$.
\end{itemize}

\end{document}